\newcommand\ket[1]{\ensuremath{|#1\rangle}}
\newcommand\bra[1]{\ensuremath{\langle#1|}}
\newcommand\oprod[2]{\ensuremath{|#1\rangle\langle#2|}}
\newcommand\tr{\mathop{\rm tr}\nolimits}
\newcommand\Tr{\mathop{\rm Tr}\nolimits}
\newcommand\rank{\mathop{\rm rank}\nolimits}
\def\C{\mathbb{C}}
\def\N{\mathbb{N}}
\def\R{\mathbb{R}}
\DeclareMathOperator{\im}{im}
\DeclareMathOperator{\diag}{diag}
\def\A{{\mathcal A}}
\newtheorem{lemma}{Lemma}
\newtheorem{theorem}{Theorem}
\newtheorem{corollary}{Corollary}
\newtheorem{proposition}{Proposition}
\newtheorem{example}{Example}
\begin{document}

%% End-Of-Header

\title{Uniqueness of Quantum States Compatible with Given Measurement Results}

\author{Jianxin Chen}%
\affiliation{Department of Mathematics \& Statistics, University of
  Guelph, Guelph, Ontario, Canada}%
\affiliation{Institute for Quantum Computing, University of Waterloo,
  Waterloo, Ontario, Canada}%
%\affiliation{Academy of Mathematics and Systems Science,
% Chinese Academy of Sciences, Beijing, China}%
\author{Hillary Dawkins}
\affiliation{Department of Mathematics \& Statistics, University of
  Guelph, Guelph, Ontario, Canada}%
\author{Zhengfeng Ji}%
\affiliation{Institute for Quantum Computing, University of Waterloo,
  Waterloo, Ontario, Canada}%
\affiliation{State Key Laboratory of Computer Science, Institute of
  Software, Chinese Academy of Sciences, Beijing, China}%
\author{Nathaniel Johnston}%
\affiliation{Department of Mathematics \& Statistics, University of
  Guelph, Guelph, Ontario, Canada}%
\affiliation{Institute for Quantum Computing, University of Waterloo,
  Waterloo, Ontario, Canada}%
\author{David Kribs}%
\affiliation{Department of Mathematics \& Statistics, University of
  Guelph, Guelph, Ontario, Canada}%
\affiliation{Institute for Quantum Computing, University of Waterloo,
  Waterloo, Ontario, Canada}%
\author{Frederic Shultz}
\affiliation{Mathematics Department, Wellesley College, Wellesley,
  Massachusetts, USA}%
\author{Bei Zeng}%
\affiliation{Department of Mathematics \& Statistics, University of
  Guelph, Guelph, Ontario, Canada}%
\affiliation{Institute for Quantum Computing, University of Waterloo,
  Waterloo, Ontario, Canada}%

\begin{abstract}
  We discuss the uniqueness of quantum states compatible with given
  measurement results for a set of observables. For a given pure
  state, we consider two different types of uniqueness: (1) no other
  pure state is compatible with the same measurement results and (2)
  no other state, pure or mixed, is compatible with the same
  measurement results. For case (1), it was known that for a
  $d$-dimensional Hilbert space, there exists a set of $4d-5$
  observables that uniquely determines any pure state. We show that
  for case (2), $5d-7$ observables suffice to uniquely determine any
  pure state. Thus there is a gap between the results for (1) and (2),
  and we give some examples to illustrate this. Unique determination
  of a pure state by its reduced density matrices (RDMs), a special
  case of determination by observables, is also discussed. We improve
  the best known bound on local dimensions in which almost all pure
  states are uniquely determined by their RDMs for case (2). We
  further discuss circumstances where (1) can imply (2). We use
  convexity of the numerical range of operators to show that when only
  two observables are measured, (1) always implies (2). More
  generally, if there is a compact group of symmetries of the state
  space which has the span of the observables measured as the set of
  fixed points, then (1) implies (2). We analyze the possible
  dimensions for the span of such observables. Our results extend
  naturally to the case of low rank quantum states.
\end{abstract}

\date{\today}

\pacs{03.65.Ud, 03.67.Mn, 89.70.Cf}

\maketitle

\section{I. Introduction}

In a $d$-dimensional Hilbert space $\mathcal{H}_d$, the description of
any quantum state $\rho$ generated by a source can be obtained by
quantum tomography. For any density matrix $\rho$, which is Hermitian
and has trace $1$, $d^2-1$ independent measurements are sufficient and
necessary to uniquely specify $\rho$. When $\rho=\ket{\psi}\bra{\psi}$
is a pure state, one may not need as many measurements to uniquely
determine $\ket{\psi}$. As we will see later, however, exactly what is
meant by ``uniquely'' in this context needs to be specified.

Consider a set of $m$ linearly independent observables
\begin{equation}
\label{eq:A}
\mathbf{A}=(A_1,A_2,\ldots,A_m)
\end{equation}
where each $A_i$ is Hermitian. Measurements on state $\rho$ with
respect to these observables give the following average values
\begin{equation}
\mathbf{A}(\rho):=(\tr\rho A_1,\tr\rho A_2,\ldots,\tr\rho A_m) \in
\mathbb{R}^m.
\end{equation}
We denote the set of these $\mathbf{A}(\rho)$ for all states $\rho$ as
\begin{equation}
\label{eq:Cm}
C_m(\mathbf{A}):=\{\mathbf{A}(\rho):\, \rho \text{ acts on }
\mathcal{H}_d\}.
\end{equation}
For a pure state $\ket{\psi}$, these values are given by
\begin{equation}
\mathbf{A}(\ket{\psi}):=(\bra{\psi} A_1\ket{\psi},\bra{\psi}
A_2\ket{\psi},\ldots, \bra{\psi} A_m\ket{\psi}),
\end{equation}
and we denote the set of these values for all pure states $\ket{\psi}$
as the joint numerical range
\begin{equation}
\label{eq:Wm}
W_m(\mathbf{A}):=\{\mathbf{A}(\ket{\psi}):\, \ket{\psi}\in\mathcal{H}_d\}.
\end{equation}

In this work we consider two different kinds of ``unique
determinedness'' for $\ket{\psi}$:
\begin{enumerate}
\item We say $\ket{\psi}$ is \emph{uniquely determined among pure
    states (UDP) by measuring $\mathbf{A}$} if there does not exist
  any other pure state which has the same measurement results as those
  of $\ket{\psi}$ when measuring $\mathbf{A}$.
\item We say $\ket{\psi}$ is \emph{uniquely determined among all
    states (UDA) by measuring $\mathbf{A}$} if there does not exist
  any other state, pure or mixed, which has the same measurement
  results as those of $\ket{\psi}$ when measuring $\mathbf{A}$.
\end{enumerate}

It is known that there exists a family of $4d-5$ observables such that
any pure state is UDP, in contrast to the $d^2-1$ observables in the
general case of quantum tomography ~\cite{HTW11}. The physical meaning
for this case is clear: it is useful for the purpose of quantum
tomography to have the prior knowledge that the state to be
reconstructed is pure or nearly pure. Many other techniques for pure
state tomography have been developed, and experiments have been
performed to demonstrate the reduction of the number of measurements
needed~\cite{Wei92,AW99,Fin04,FSC05,GLF+10,CPF+10,LZL+12}.

When the state is UDP, to make the tomography meaningful, one needs to
make sure that the state is indeed pure. This is not in general
practical, but one can readily generalize the above mentioned UDP
results to low rank states, where the physical constraints (e.g., low
temperature, locality of interaction) may ensure that the actual
physical state (which ideally supposed to be pure) is indeed low rank.
If the state is UDA, however, in terms of tomography one do not need
to bother with these physical assumptions, because in the event there
is only a unique state compatible with the measurement results, which
turns out to be pure (or low rank).

There is also another clear physical meaning for the states that are
UDA by measuring $\mathbf{A}$. Consider a Hamiltonian of the form
\begin{equation}
H_{\mathbf{A}}=\sum_{i=1}^{m} \alpha_i A_i.
\end{equation}
Then any unique ground state $\ket{\psi}$ of $H_{\mathbf{A}}$ is UDA
by measuring $\mathbf{A}$. This is easy to verify: if there is any
other state $\rho$ that gives the same measurement results, then
$\rho$ has the same energy as that of $\ket{\psi}$, which is the
ground state energy. Therefore, any pure state in the range of $\rho$
must also be a ground state, which contradicts the fact that
$\ket{\psi}$ is the unique ground state. In other words, UDA is a
necessary condition for $\ket{\psi}$ to be a unique ground state of
$H_{\mathbf{A}}$. It is in general not sufficient, but the exceptions
are likely rare~\cite{CJZ+11,CJR+12}.

The uniqueness properties for pure states, for both UDP and UDA, have
also been studied extensively in the case of multipartite quantum
systems, where the observables correspond to reduced density matrices
(RDMs). That is, the observables are chosen to act nontrivially on
only some subsystems. For an $n$-particle system and a constant $k<n$,
there are a total of ${n\choose k}$ $k$-RDMs, and the corresponding
measurements $\mathbf{A}$ are those $\leq k$-body operators. For
example, for a three-qubit system and $k=2$, one can choose
$\mathbf{A}$ as all the one and two-particle Pauli operators. Of
course, one can also choose to look at some of the $n\choose k$-RDMs,
rather than all of them. For instance, for a three-particle system,
one can look at $2$-RDMs of particle pairs $\{1,2\}$ and $\{1,3\}$.

It is known that almost all three-qubit pure states are UDA by their
$2$-RDMs~\cite{LPW02}. These authors also show that UDP implies UDA
for three-qubit pure states, for $2$-RDMs. This result can be further
improved to $2$-RDMs of particle pairs $\{1,2\}$ and
$\{1,3\}$~\cite{CJZ+11}. More generally one can consider a
three-particle system of particles $1,2,3$ with Hilbert spaces whose
dimensions are $d_1,d_2, d_3$, respectively. If $d_1\geq d_2+d_3-1$,
then almost all pure states are UDA by their $2$-RDMs of particle
pairs $\{1,2\}$ and $\{1,3\}$. In contrast, if $d_1\geq 2$, then
almost all pure states are UDP by their $2$-RDMs of particle pairs
$\{1,2\}$ and $\{1,3\}$, as shown by Diosi \cite{Dio04}.

For $n$-particle quantum systems with equal dimensional subsystems,
almost all pure states are UDA by their $k$-RDMs of just over half of
the parties (i.e., $k\sim n/2$). Furthermore, $\sim n/2$ properly
chosen RDMs among all the $n\choose k$ $k$-RDMs suffice~\cite{JL05}.
W-type states are UDA by their $2$-RDMs, and $n-1$ of those $2$-RDMs
are enough~\cite{PR08}. General symmetric Dicke states are UDA by
their $2$-RDMs~\cite{CJW+12}. It has been shown that the only
$n$-particle pure states which cannot be UDP by their $(n-1)$-RDMs are
those GHZ-type states, and the result is further improved to the case
of UDA~\cite{WL08}. Their results also show that UDP implies UDA for
$n$-qubit pure states, for $(n-1)$-RDMs.

Despite these many results, there is no systematic study of these two
different types of uniqueness for pure states. This will be the focus
of this paper, where we are interested in knowing for given
measurements $\mathbf{A}$, whether UDP and UDA are the same, or are
different. We will give a general argument that there is a gap between
the number of observables needed for the two different cases. However,
in many interesting circumstances, they can coincide. Our discussions
extend naturally to the case of low rank quantum states instead of
just pure states. Here one can also look at two kinds of uniqueness
when measuring given observables $\mathbf{A}$: one is uniqueness among
all low rank states, the other is among all states of any rank.

We organize the paper as follows. In Sec. II, we first show that there
is a set of $5d-7$ observables that insures every pure state is UDA;
which should be compared to the UDP result $4d-5$. Thus in general
there is a gap between the optimal results for the UDP and UDA cases,
and we illustrate this with some examples. Sec. III discusses the case
of observables corresponding to RDMs of a multipartite quantum state,
where for the three particle case, we show that if
$d_1\geq\min(d_2,d_3)$, then almost all pure states are UDA by their
$2$-RDMs of particle pairs $\{1,2\}$ and $\{1,3\}$, improving the
bounds given in \cite{LW02}. However this still leaves a gap with the
Diosi result for the case of UDP in \cite{Dio04}. We further discuss
circumstances where UDP can imply UDA for all pure states. In Sec. IV,
we show that when there are only two independent measurements
performed, then UDP always implies UDA, by making use of convexity of
the numerical range of operators. In a more general case, if there is
a compact group of symmetries of the state space which has the span of
the operators measured as its set of fixed points, then UDP implies
UDA for all pure states. We analyze the possible dimensions for those
fixed point sets. A summary and some discussions are included in Sec.
VI.

\section{II. The number of observables for UDA}

In this section, we discuss the minimum number of observables needed
to have all pure states be UDA. We start by choosing a Hermitian basis
$\{\lambda_i\}_{i=0}^{d^2-1}$ for the operators on $\mathcal{H}_d$.
Without loss of generality we choose $\lambda_0=\sqrt{d-1}I$, the
identity operator on $\mathcal{H}_d$, which has trace $d$. We further
require that the $\lambda_i$'s are orthogonal, in the sense that for
$i,j\geq 0$,
\begin{equation}
\tr\lambda_i\lambda_j=d(d-1)\delta_{ij}.
\end{equation}

The $d \times d$ Hermitian matrices form a real inner product space
with inner product $\langle A, B\rangle= \tr(AB)$, so such a basis
$\{\lambda_i\}_{i=0}^{d^2-1}$ exists for any dimension $d$. For
instance, for the qubit case ($d=2$), we can choose the Pauli basis
\begin{equation}
\label{eq:Pauli}
\lambda_1=\begin{pmatrix}0 & 1\\1& 0\end{pmatrix},\
\lambda_2=\begin{pmatrix}0 & -i\\i& 0\end{pmatrix},\
\lambda_3=\begin{pmatrix} 1 & 0\\ 0& -1\end{pmatrix}.
\end{equation}

For the qutrit case ($d=3$), one can choose $\lambda_i=\sqrt{3}M_i$
for $i>0$, where $M_i$s are the Gell-Mann matrices given by

\begin{eqnarray}
\label{eq:GellMan}
\begin{array}{ccc}
M_1	=
\left(
\begin{array}{ccc}
0  & 1  & 0  \\
1  & 0  & 0  \\
0  & 0  & 0
\end{array}
\right),
&
M_2	=
\left(
\begin{array}{ccc}
0  & -i  & 0  \\
i  & 0  & 0  \\
0  & 0  & 0
\end{array}
\right),
\\
M_3	=
\left(
\begin{array}{ccc}
1  & 0  & 0  \\
0  & -1  & 0  \\
0  & 0  & 0
\end{array}
\right),
&
M_4	=
\left(
\begin{array}{ccc}
0  & 0  & 1  \\
0  & 0  & 0  \\
1  & 0  & 0
\end{array}
\right),
\\
M_5	=
\left(
\begin{array}{ccc}
0  & 0  & -i  \\
0  & 0  & 0  \\
i  & 0  & 0
\end{array}
\right),
&
M_6	=
\left(
\begin{array}{ccc}
0  & 0  & 0  \\
0  & 0  & 1  \\
0  & 1  & 0
\end{array}
\right),\\
M_7	=
\left(
\begin{array}{ccc}
0  & 0  & 0  \\
0  & 0  & -i  \\
0  & i  & 0
\end{array}
\right),
& M_8	=
\frac{1}{\sqrt{3}}\left(
\begin{array}{ccc}
1  & 0  & 0  \\
0  & 1  & 0  \\
0  & 0  & -2
\end{array}
\right).
  &
\end{array}
\end{eqnarray}

For general $d$, one can choose $\lambda_i=\sqrt{\frac{d(d-1)}{2}}M_i$
for $i>0$, where $M_i$s are the generalized Gell-Man matrices.

We can now write any density operator $\rho$ as
\begin{equation}
\label{eq:rho}
\rho=\frac{1}{d}(I+\vec{r}\cdot\vec{\lambda}),
\end{equation}
where $\vec{\lambda}=(\lambda_1,\lambda_2,\ldots,\lambda_{d^2-1})$,
and where $\vec{r}=(r_1,r_2,\ldots,r_{d^2-1})$ has real entries.

We have $\tr\rho^2\leq 1$, therefore $\vec{r}\cdot\vec{r}\leq 1$, and
the equality holds if $\rho$ is a pure state. However, not every state
satisfying $\vec{r}\cdot\vec{r}=1$ is a pure state. Indeed, $\rho$ is
a pure state if and only if $\rho^2=\rho$, which gives equations that
$\vec{r}$ needs to satisfy.

If one of the observables is a multiple of the identity, then we can
drop it from the list of observables without affecting UDA and UDP. If
two states agree on an observable $A_i$, then they agree on $A_i+ t
I$ for any real scalar $t$, so we can adjust each of the observables
$\mathbf{A} = (A_1, \ldots, A_m)$ to have trace zero without affecting
UDA or UDP. Hence hereafter we assume all $A_i$ are traceless.

For any observable $A_i$, we can expand in terms of $\{\lambda_i\}$ as
\begin{equation}
A_i=\sum_{j=1}^{d^2-1} \alpha_{ij}\lambda_j.
\end{equation}

Then the average value of $A_i$ is given by
\begin{equation}
  \tr(A_i\rho) = \frac{1}{d}(d+\sum_j r_j\alpha_{ij}d(d-1)) =
  1+(d-1)\vec{r}\cdot\vec{\alpha_i},
\end{equation}
where $\vec{\alpha_i} = \{\alpha_{i1}, \alpha_{i2}, \ldots,
\alpha_{i(d^2-1)}\}$.

To discuss the problem for any pure state to be UDA, the constant $1$
and constant factor $d-1$ can be ignored, as these are the same
constants for all states. Therefore we have
\begin{equation}
\tr(A_i\rho)\sim \vec{r}\cdot\vec{\alpha_i},
\end{equation}
where $\sim$ means that the average value of $A_i$ for the state
$\rho$ is geometrically equivalent to the projection of $\vec{r}$ onto
$\vec{\alpha_i}$.

Alternatively, define $T:\mathbb{R}^{d^2-1} \to \mathbb{R}^m$ by
$T(\vec r) = (\vec r\cdot \alpha_1, \ldots, \vec r \cdot \vec
\alpha_m)$. Let $L$ be the linear subspace of $\mathbb{R}^{d^2-1}$
spanned by $\vec \alpha_1, \ldots, \vec \alpha_m$, and let $\pi$ be
the orthogonal projection from $\mathbb{R}^{d^2-1}$ onto $L$. Then
$\pi$ and $T$ have the same kernel, namely $L^\perp$. Thus for states
$\rho_1, \rho_2$, we have $T(\rho_1) = T(\rho_2)$ if and only if
$\pi(\rho_1) = \pi(\rho_2)$, so in considering UDA and UDP we can
treat $T$ as being the orthogonal projection onto $L$.

If we subtract the density matrix $I/d$ from all states, then the
translated set of states sits in the real $d^2-1$ dimensional subspace
of trace zero Hermitian matrices. In this sense, we are actually
working with real geometry in $\mathbb{R}^{d^2-1}$. All quantum states
then sit inside the $d^2-1$-dimensional unit ball, with pure states
corresponding to unit vectors, but not every vector on the unit
$d^2-2$-dimensional sphere is a pure state. The observables span an
$m$-dimensional subspace that all the quantum states will be projected
onto. We will simply say the subspace is spanned by $\mathbf{A}$ when
no confusion arises, and we will no longer distinguish an operator
$A_i$ from the corresponding vector $\vec{\alpha_{i}}$. Indeed we only
consider the real span of $\mathbf{A}$, and we denote it by
$\mathcal{S}(\mathbf{A})$. For each $\mathcal{S}(\mathbf{A})$, there
is an orthogonal subspace in $\mathbb{R}^{d^2-1}$ of dimension
$d^2-1-m$, which we denote by $\mathcal{S}(\mathbf{A})^{\perp}$. Here
we are taking the orthogonal complement in the space of traceless
Hermitian matrices, so that every
$V\in\mathcal{S}(\mathbf{A})^{\perp}$ is traceless.

We now are ready to state our first theorem.

\begin{theorem}
\label{th:UDA}
For a $d$-dimensional system ($d>2$), there exists a set of $5d-7$
observables for which every pure state is UDA.
\end{theorem}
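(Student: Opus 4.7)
My approach is to translate UDA into a spectral condition on the orthogonal complement $\mathcal{S}(\mathbf{A})^\perp$ (in the space of traceless Hermitian matrices), and then to exhibit a basis of $\mathcal{S}(\mathbf{A})$ of size $5d-7$ realizing that condition.

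For the characterization, observe that $\rho=\ket{\psi}\bra{\psi}$ is UDA by $\mathbf{A}$ iff no nonzero traceless Hermitian $V\in\mathcal{S}(\mathbf{A})^\perp$ makes $\rho+V\ge 0$. Writing $P=I-\rho$ and Schur-complementing against the $\ket{\psi}$ direction, together with the rescaling $V\mapsto\epsilon V$ for small $\epsilon>0$, this reduces (modulo a generically satisfied range condition on $V_{PQ}$ versus $\mathrm{range}(PVP)$) to requiring that $PVP$ has a strictly negative eigenvalue for every nonzero $V\in\mathcal{S}(\mathbf{A})^\perp$. Applying Cauchy interlacing to both $V$ and $-V$ and quantifying over $\ket{\psi}$ then gives the clean global criterion: every pure state is UDA by $\mathbf{A}$ iff every nonzero $V\in\mathcal{S}(\mathbf{A})^\perp$ has at least two strictly positive and at least two strictly negative eigenvalues.

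Dimensionally, $\mathcal{S}(\mathbf{A})^\perp$ must have dimension $(d^2-1)-(5d-7)=(d-2)(d-3)$. To produce such an $\mathcal{S}(\mathbf{A})$, I would start from the $(4d-5)$-observable UDP family of~\cite{HTW11}, which already rules out the rank-two obstructions of the form $\ket{\phi}\bra{\phi}-\ket{\psi}\bra{\psi}$, and adjoin $d-2$ carefully chosen observables---for instance further off-diagonal matrix units along a second row---whose role is to force a second negative eigenvalue into every nonzero element of the resulting $\mathcal{S}(\mathbf{A})^\perp$.

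The main obstacle is verifying this spectral property for the constructed family. A naive sparsity-based complement is insufficient, since, e.g., the zero-diagonal pattern on the bottom-right $(d-2)\times(d-2)$ block already has the right dimension $(d-2)(d-3)$ but admits a padded $2\times 2$ Pauli-$X$ block with only one negative eigenvalue. The $d-2$ augmenting observables must therefore be tailored so that the resulting $\mathcal{S}(\mathbf{A})^\perp$ contains no such low-rank indefinite offender; I expect the verification to proceed by writing any hypothetical $V\in\mathcal{S}(\mathbf{A})^\perp$ with a single negative eigenvalue as a ``PSD plus rank-one'' decomposition and deriving a contradiction from the linear constraints imposed by all $5d-7$ observables, with the count $(4d-5)+(d-2)=5d-7$ reflecting the minimum number of extra constraints needed to rule out this last family of pure-state/mixed-state ambiguities.
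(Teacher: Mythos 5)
Your reduction of the theorem to a spectral condition on $\mathcal{S}(\mathbf{A})^{\perp}$ is essentially the paper's: a pure state $\ket{\psi}\bra{\psi}$ fails to be UDA exactly when some nonzero $V\in\mathcal{S}(\mathbf{A})^{\perp}$ makes $\ket{\psi}\bra{\psi}+V\ge 0$, and since subtracting a rank-one positive matrix from a PSD matrix can push at most one eigenvalue below zero, this happens iff some nonzero $V$ has at most one negative eigenvalue. So ``every nonzero $V\in\mathcal{S}(\mathbf{A})^{\perp}$ has at least two positive and two negative eigenvalues'' is the right criterion (you can get it cleanly from rank-one interlacing; the Schur-complement detour with its ``generically satisfied range condition'' is unnecessary and is the one place where your derivation is not airtight). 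The dimension count $(d^2-1)-(5d-7)=(d-2)(d-3)$ is also correct.

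The genuine gap is that you never construct the subspace. The entire content of the theorem is the existence of a $(d-2)(d-3)$-dimensional real subspace of traceless Hermitian matrices in which every nonzero element has at least two positive and two negative eigenvalues, and your proposal replaces this with a conjecture: adjoin $d-2$ unspecified observables to the UDP family of Heinosaari et al.\ and ``expect the verification to proceed'' by a PSD-plus-rank-one contradiction argument. You correctly observe that naive sparsity patterns fail (your padded Pauli-$X$ example), which is precisely why an unspecified augmentation cannot be accepted on faith; nothing in the proposal rules out that the augmented complement still contains a matrix with a single negative eigenvalue. The paper does not build on the UDP family at all. Instead it constructs $\mathcal{S}(\mathbf{A})^{\perp}$ directly by an antidiagonal filling: for each antidiagonal line $k$ with $3\le k\le 2d-5$ (of length $L_k\ge 2$) it takes $L_k-1$ columns of a real totally non-singular matrix, so that every nonzero real combination has at least two nonzero entries on that line, and forms the corresponding real and imaginary Hermitian line matrices. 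Any nonzero real combination $H$ then has, on its highest occupied antidiagonal, two nonzero entries $a,b$, producing a $4\times 4$ principal submatrix with zero trace, zero diagonal, and determinant $|ab|^2>0$, hence exactly two positive and two negative eigenvalues; Cauchy interlacing transfers this to $H$. The count $\sum_{k=3}^{2d-5}2(L_k-1)=d^2-5d+6$ gives the complementary dimension $5d-7$. Without an explicit construction and its verification, your argument does not establish the theorem.
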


To see why this is the case, note that in the above-mentioned
geometrical picture, it is clear that a pure state
$\ket{\psi}\bra{\psi}$ is UDA by measuring $\mathbf{A}$ if there does
not exist any operator $V\in\mathcal{S}(\mathbf{A})^{\perp}$, such
that $\ket{\psi}\bra{\psi}+V$ is positive. One sufficient condition
will then be that any operator $V\in\mathcal{S}(\mathbf{A})^{\perp}$
has at least two positive and two negative eigenvalues. We will use
this sufficient condition to construct a desired
$\mathcal{S}(\mathbf{A})^{\perp}$.

In order to construct $\mathcal{S}(\mathbf{A})^{\perp}$, we provide a
set of $m = d^2-5d+6$ linearly independent Hermitian matrices $H_1,
H_2, \ldots, H_m \in M_d(\C)$ explicitly, such that the Hermitian
matrix
\begin{equation*}
  \sum_{j=1}^m r_j H_j
\end{equation*}
has at least two positive eigenvalues for any nonzero real vector
$r=(r_j) \in \R^m$.

Our construction is motivated by and similar to the diagonal filling
technique used in Ref.~\cite{CMW08}, but along the other direction of
the diagonals.

This then means that measuring $d^2-1-(d^2-5d+6)=5d-7$ observables is
enough for any pure state to be UDA, which proves the theorem. There
are indeed technical details to be clarified that we leave to Appendix
A.

If we compare our results with those given in ~\cite{HTW11}, which
shows that measuring $4d-5$ observables are enough for any pure state
to be UDP, there exists an obvious gap. We claim that this gap indeed
cannot be closed in general. To see this, let us look at the simplest
case of $d=3$, where the results just compared state that $7$
observables are enough for any pure state to be UDP but $8$
observables are enough for any pure state to be UDA.

If one can measure a particular set $\mathbf{A}$ with $7$ observables
and have all pure states be UDA, then also every state also must be
UDP for measuring $\mathbf{A}$. According to~\cite{HTW11}, this only
happens if $\mathcal{S}(\mathbf{A})^{\perp}$ contains a single
invertible traceless operator $V$, meaning $V$ is rank $3$. Without
loss of generality we can assume the largest eigenvalue $V$ to be
positive with an eigenstate $\ket{\psi}$. Then $\ket{\psi}$ is not UDA
by measuring $\mathbf{A}$ since as observed in ~\cite{HTW11} there
exists a mixed state which also has the same average values as those
of $\ket{\psi}$. Therefore, one cannot only measure $7$ observables
for all pure states to be UDA.

For general $d$, our construction needs $5d-7$ observables. We do not
know whether this is the optimal construction, but it is very unlikely
one can get this down to $4d-5$. In other words, in general UDA and
UDP for pure states should be indeed two different concepts and there
should always be gaps between the number of observables needed to be
measured for each case to uniquely determine any pure quantum state.
This is one exception though, which is for the qubit case (i.e.,
$d=2$) where it is shown in~\cite{HTW11} that for all pure states to
be UDP, one needs to measure $3=2^2-1$ variables, which then uniquely
determine any quantum state among all states.

Finally, we remark that our results in Theorem 1 naturally extend to
the case of low rank states. That is, for a rank $q<d/2$ quantum state
$\rho$, we can similarly consider two different cases: (1) $\rho$ is
uniquely determined by measuring $\mathbf{A}$ among all rank $\leq q$
states (which was considered in ~\cite{HTW11}) (2) $\rho$ is uniquely
determined by measuring $\mathbf{A}$ among all quantum states of any
rank.

\begin{theorem}
\label{th:UDAq}
For a $d$-dimensional system ($d>2$) measuring $(4q+1)d-(4q^2+2q+1)$
observables is enough for a rank $\leq q$ state to be uniquely
determined among all states.
\end{theorem}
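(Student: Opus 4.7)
The geometric setup from Theorem~\ref{th:UDA} carries over verbatim: a rank-$\leq q$ state $\rho$ fails to be UDA by measuring $\mathbf{A}$ precisely when some nonzero $V \in \mathcal{S}(\mathbf{A})^\perp$ satisfies $\rho + V \geq 0$. I would replace the ``at least two positive and two negative eigenvalues'' sufficient condition of Theorem~\ref{th:UDA} by its rank-dependent analogue: every nonzero $V \in \mathcal{S}(\mathbf{A})^\perp$ has at least $q+1$ positive and at least $q+1$ negative eigenvalues (the two bounds being equivalent under $V \mapsto -V$ since $\mathcal{S}(\mathbf{A})^\perp$ is a subspace). To see this suffices, if $\rho + V \geq 0$ with $\rank(\rho) \leq q$, then $V$ is positive semidefinite on $\ker \rho$, a subspace of dimension $\geq d - q$; by the standard min-max fact that a Hermitian matrix with $k$ negative eigenvalues cannot be PSD on any subspace of dimension greater than $d - k$, we conclude $V$ has at most $q$ negative eigenvalues, contradicting the hypothesis.

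The theorem therefore reduces to exhibiting a subspace $\mathcal{S}(\mathbf{A})^\perp$ of traceless Hermitian matrices of dimension $(d - 2q)(d - 2q - 1)$ in which every nonzero element has at least $q+1$ negative eigenvalues; the complementary observable span then has dimension $m = (d^2 - 1) - (d - 2q)(d - 2q - 1) = (4q+1)d - (4q^2 + 2q + 1)$, recovering $5d-7$ at $q = 1$ as a sanity check. For the construction, I would generalize the antidiagonal-filling scheme behind Theorem~\ref{th:UDA} by widening the excluded central band from the $3$ antidiagonals $|i+j-(d+1)|\le 1$ to the $4q+1$ antidiagonals $|i+j-(d+1)|\le 2q$, leaving exactly $(d-2q)(d-2q-1)$ entry positions concentrated in two opposite ``corners'' of the matrix. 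The same trace-normalized bookkeeping used at $q=1$ then picks out $(d-2q)(d-2q-1)$ linearly independent traceless Hermitian basis matrices supported on those positions.

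The main technical obstacle is the inertia estimate for an arbitrary nonzero $V$ in this subspace. After a basis permutation sending the corner support into block form, $V$ has a zero principal block in one corner and a rectangular off-diagonal block $B$ carrying its filled antidiagonal entries; the classical inequality $n_+(V), n_-(V) \geq \rank(B)$ for Hermitian matrices with a zero principal block then reduces the problem to showing that the staircase support structure forces $\rank(B) \geq q+1$ whenever $V \neq 0$. This is a combinatorial extension of the $q=1$ rank argument in Appendix~A, the extra work consisting of tracking how the wider excluded band interacts with the staircase of filled positions to ensure a nonzero $(q+1)\times(q+1)$ minor of $B$. I expect this step, together with the explicit trace-zero normalization of the Hermitian basis, to be deferred to the appendix in the same style as Theorem~\ref{th:UDA}.
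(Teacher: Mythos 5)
Your reduction of the theorem to an inertia condition on $\mathcal{S}(\mathbf{A})^{\perp}$ is correct and is exactly the paper's: one needs a $(d-2q)(d-2q-1)$-dimensional space of traceless Hermitian matrices in which every nonzero element has at least $q+1$ positive (equivalently, by $V\mapsto -V$, negative) eigenvalues, and your kernel/min--max argument for why this suffices, as well as the arithmetic $(d^2-1)-(d-2q)(d-2q-1)=(4q+1)d-(4q^2+2q+1)$, are fine. The construction, however, does not work. Any subspace consisting of \emph{all} Hermitian matrices supported on a fixed pattern of positions contains elements of inertia $(1,1,d-2)$: if the pattern contains an off-diagonal position $(i,j)$ it contains $E_{ij}+E_{ji}$, and if it contains only diagonal positions its traceless part contains $E_{ii}-E_{jj}$. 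In your proposed pattern $\{(i,j): |i+j-(d+1)|>2q\}$ the position $(1,2)$ survives whenever $d>2q+2$, so already for $q=1$ and $d\geq 5$ the matrix $E_{12}+E_{21}$ lies in your subspace and has a single positive eigenvalue. No support pattern alone can enforce $n_+\geq q+1\geq 2$, so the ``staircase forces $\rank(B)\geq q+1$'' step you defer is not merely technical --- it is false for this subspace. (A secondary issue: your pattern contains diagonal positions, so imposing tracelessness costs a dimension and leaves you one short of $(d-2q)(d-2q-1)$.)

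What the paper actually does is keep \emph{every} antidiagonal line $k$ of length $L_k\geq q+1$ (i.e., $2q+1\leq k\leq 2d-2q-3$) but, on each such line, use only an $(L_k-q)$-dimensional family of fillings drawn from the columns of a totally non-singular matrix (Lemma~\ref{lem:cmw08}); this is the essential ingredient your proposal is missing, and it guarantees that every nonzero real combination has at least $q+1$ nonzero entries on its top-most nonzero line. Those entries sit on the antidiagonal of a $2(q+1)\times 2(q+1)$ anti-triangular, traceless, invertible principal submatrix, which has exactly $q+1$ positive eigenvalues --- the paper proves this by induction on $q$, though your inequality $n_{\pm}(V)\geq\rank(B)$ for a Hermitian matrix with a zero principal block gives a cleaner proof of the same fact --- and Cauchy interlacing (Theorem~\ref{thm:horn90}) transfers the bound to the full matrix. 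The count $2\sum_{k=2q+1}^{2d-2q-3}(L_k-q)=(d-2q)(d-2q-1)$ then recovers the dimension you computed. So your inertia lemma is a genuinely useful observation, but it must be applied to the totally-non-singular-matrix construction, not to a subspace cut out by a support pattern.
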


Compared to the results in~\cite{HTW11}, where $4q(d-q)-1$ observables
are needed to uniquely determine any rank $\leq q$ states among all
rank $\leq q$ states, when $d$ is large the difference in the leading
term has a $d$ gap. The proof idea is similar to that of
Theorem~\ref{th:UDA}, so we leave the details to Appendix A.

\section{III. The case of reduced density matrices}

In this section we discuss the case where the Hilbert space
$\mathcal{H}_d$ is a multipartite quantum system, where the
observables correspond to the reduced density matrices (RDMs). That
is, the observables are chosen to be acting nontrivially only on some
subsystems. For instance, for a three-qubit system, the observables
corresponding to the $2$-RDMs of particle pairs $\{1,2\}$ can be
chosen as
\begin{eqnarray}
\mathbf{A}=&(X_1,X_2, Y_1,Y_2,Z_1,Z_2 \nonumber\\
&X_1X_2,X_1Y_2,X_1Z_2,Y_1X_2,Y_1Y_2,\nonumber\\
&Y_1Z_2,Z_1X_2,Z_1Y_2,Z_1Z_2),
\end{eqnarray}
where $X_i,Y_i,Z_i$ are Pauli $X,Y,Z$ operators acting on the $i$th
qubit.

For simplicity in this section we consider only $3$-particle systems,
labeled by $1,2,3$, and each with Hilbert space dimension
$d_1,d_2,d_3$, respectively. That is, $\mathcal{H}_d =
\mathcal{H}_{d_1} \otimes \mathcal{H}_{d_2} \otimes \mathcal{H}_{d_3}$
and $d=d_1d_2d_3$. Nevertheless, our method naturally extends to
systems of more than $3$-particles.

Recall that for a three particle system, it is known that almost all
three-qubit pure states are UDA by their $2$-RDMs~\cite{LPW02}. This
result can be further improved to $2$-RDMs of particle pairs $\{1,2\}$
and $\{1,3\}$~\cite{CJZ+11}. More generally, if $d_1\geq d_2+d_3-1$,
then almost all pure states are UDA by their $2$-RDMs of particle
pairs $\{1,2\}$ and $\{1,3\}$~\cite{LW02}. In contrast, if $d_1\geq
2$, then almost every pure state is UDP by its $2$-RDMs of particle
pairs $\{1,2\}$ and $\{1,3\}$~\cite{Dio04}.

We notice that different from the discussion in Sec. II, one no longer
considers uniqueness for all pure states, but `almost all' of them.
This means there exists a measure zero set of pure states which are
not uniquely determined. For instance, for the three qubit case, any
state which is local unitarily equivalent to the GHZ type state
\begin{equation}
\ket{GHZ}_{\text{type}}=a\ket{000}+b\ket{111}
\end{equation}
cannot be UDP, as any state of the form
$a\ket{000}+be^{i\theta}\ket{111}$ has the same $2$-RDMs as those of
$\ket{GHZ}_{\text{type}}$. This means that, for a three qubit pure state
$\ket{\psi}$, it is either UDA, or not UDP. In other words, if any
three qubit pure state $\ket{\psi}$ is UDP, then it is UDA by its
$2$-RDMs of particle pairs $\{1,2\}$ and $\{1,3\}$. In this sense, we
say in this case UDP implies UDA for all pure states.

However, for the general case of a three particle system, there is a
gap between known results of UDA and UDP. Our following result
improves the bound for the UDA case.

\begin{theorem}
\label{th:RDM1}
If $d_1\geq\min (d_2,d_3)$, then almost every tripartite quantum state
$\ket{\phi}\in
\mathcal{H}_{d_1}\otimes\mathcal{H}_{d_2}\otimes\mathcal{H}_{d_3}$ is
UDA by its $2$-RDMs of particle pairs $\{1,2\}$ and $\{1,3\}$,
\end{theorem}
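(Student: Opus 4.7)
The plan is to recast UDA failure as the existence of a companion state orthogonal to $\ket{\phi}$ sharing its 2-RDMs, and then exploit the Schmidt structure across the cut separating the smaller of systems $2$ and $3$ from the other two in order to force this companion to coincide with $\ket{\phi}\bra{\phi}$, producing a contradiction. Suppose UDA fails for $\ket{\phi}$, with some $\sigma\neq\ket{\phi}\bra{\phi}$ having $\Tr_3\sigma=\rho_{12}$ and $\Tr_2\sigma=\rho_{13}$. Writing $V=\sigma-\ket{\phi}\bra{\phi}$, positivity of $\sigma$ combined with the rank-one nature of $\ket{\phi}\bra{\phi}$ pins the negative part of $V$ to the line through $\ket{\phi}$, giving $V=-\lambda\ket{\phi}\bra{\phi}+P$ with $\lambda\in(0,1]$, $P\geq 0$, $P\ket{\phi}=0$; after rescaling, UDA fails iff there exists a state $\tau$ on $\H_{123}$ with $\tau\ket{\phi}=0$, $\Tr_3\tau=\rho_{12}$, and $\Tr_2\tau=\rho_{13}$.

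Assume without loss of generality $d_2\leq d_3$, so the hypothesis becomes $d_1\geq d_2$. For a generic $\ket{\phi}$, the Schmidt decomposition across the cut $2|13$ has full rank $d_2$,
\begin{equation*}
\ket{\phi}=\sum_{j=1}^{d_2}\sqrt{s_j}\,\ket{g_j}_{13}\ket{h_j}_2,
\end{equation*}
with $s_j>0$, $\{\ket{g_j}\}$ orthonormal in $\H_{13}$, and $\{\ket{h_j}\}$ a basis of $\H_2$. Let $G=\mathrm{span}\{\ket{g_j}\}$. Because $\Tr_2\tau=\rho_{13}$ has range $G$, the PSD operator $\tau$ is forced to be supported on $G\otimes\H_2\subset\H_{123}$, a $d_2^2$-dimensional subspace that contains $\ket{\phi}$.

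Within the Hermitian operators on $G\otimes\H_2$, expanding $\tau$ in the basis $\{\ket{g_j}_{13}\ket{h_k}_2\}$ yields
\begin{equation*}
\Tr_3\tau=\sum_{j,j'}\bigl[\Tr_3\oprod{g_j}{g_{j'}}\bigr]\otimes\Bigl[\sum_{k,k'}\tau_{jk,j'k'}\oprod{h_k}{h_{k'}}\Bigr],
\end{equation*}
so the linear map $\tau\mapsto\Tr_3\tau$ on this $d_2^4$-dimensional real space is injective precisely when the $d_2^2$ operators $\{\Tr_3\oprod{g_j}{g_{j'}}\}_{j,j'=1}^{d_2}$ are linearly independent in $\mathrm{End}(\H_1)$. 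The hypothesis $d_1\geq d_2$ makes $d_1^2\geq d_2^2$, so independence is possible, and it is easily verified on an explicit base case such as $\ket{g_j}=\ket{j}_1\ket{1}_3$, which gives $\Tr_3\oprod{g_j}{g_{j'}}=\oprod{j}{j'}_1$, manifestly independent when $d_1\geq d_2$. Once injectivity holds, the equation $\Tr_3\tau=\rho_{12}$ has the unique Hermitian solution $\tau=\ket{\phi}\bra{\phi}$ inside our subspace (which itself obviously satisfies the partial-trace equation), and this violates $\tau\ket{\phi}=0$, completing the contradiction.

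The main obstacle is upgrading the ``generic'' injectivity claim to a rigorous measure-zero statement: one must verify that the linear-dependence condition on $\{\Tr_3\oprod{g_j}{g_{j'}}\}$ cuts out a proper algebraic subvariety of the pure-state space. The explicit $\ket{\phi_0}$ above shows the governing Gram determinant is a polynomial in the coordinates of $\ket{\phi}$ that is not identically zero, so its vanishing locus has measure zero. Further bookkeeping is needed to express the Schmidt data $\{\ket{g_j},s_j,\ket{h_j}\}$ as algebraic functions of $\ket{\phi}$ and to dispense with the boundary case where the Schmidt rank across the $2|13$ cut drops below $d_2$, itself a proper subvariety excluded by the word ``almost.''
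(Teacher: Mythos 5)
Your opening reduction is false. From $\sigma=\oprod{\phi}{\phi}+V\geq 0$ one can only conclude that the compression of $V$ to $\ket{\phi}^{\perp}$ is positive semidefinite; it does not follow that $V=-\lambda\oprod{\phi}{\phi}+P$ with $P\geq 0$ and $P\ket{\phi}=0$, because $V$ may have off-diagonal terms coupling $\ket{\phi}$ to $\ket{\phi}^{\perp}$ (already for a qubit, $V=\oprod{+}{+}-\oprod{0}{0}$ has its negative eigenvector well away from $\ket{0}$). Consequently the ``only if'' direction of your claim ``UDA fails iff there is a compatible state $\tau$ with $\tau\ket{\phi}=0$'' is wrong, and a counterexample lives inside this very problem: for $\ket{\phi}=a\ket{000}+b\ket{111}$ with $a\neq b$ real positive, every state $\tau$ compatible with the two $2$-RDMs is supported on $\mathrm{span}\{\ket{000},\ket{111}\}$ with diagonal $(a^2,b^2)$ there, so $\bra{\phi}\tau\ket{\phi}\geq (a^2-b^2)^2>0$ for all of them, even though UDA fails for this state. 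GHZ-type states are non-generic, but the point is that the reduction is not a valid principle and you offer no argument for why it should hold on the generic set you work with.

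Fortunately the reduction is also unnecessary, and the remainder of your argument is correct and self-contained: any $\sigma$ with $\Tr_2\sigma=\rho_{13}$ automatically satisfies $\supp(\sigma)\subseteq G\otimes\H_{d_2}$ with $G=\range(\rho_{13})$, and once the $d_2^2$ operators $\Tr_3\oprod{g_j}{g_{j'}}$ are linearly independent (generically true when $d_1\geq d_2=\min(d_2,d_3)$), the map $\tau\mapsto\Tr_3\tau$ is injective on Hermitian operators supported there, so $\Tr_3\sigma=\rho_{12}$ forces $\sigma=\oprod{\phi}{\phi}$ outright --- no orthogonality needed. This is a genuinely different route from the paper's: the paper purifies the competing state to $\ket{\psi}_{1234}$, expands it over the support of $\rho_{12}$, and converts the $\{1,3\}$ constraint into a linear system in the Gram matrix $\langle e_{k'n'}\ket{e_{kn}}$ of environment vectors, with the same generic count ($d_1^2d_3^2$ equations against $d_3^4$ unknowns). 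Your support-plus-injectivity version avoids the purification and the Gram-matrix bookkeeping, at the cost of the same remaining genericity chore (showing that the linear-dependence locus is a proper subvariety as $\ket{\phi}$ varies), which both you and the paper discharge by exhibiting one witness and invoking density. Delete the first paragraph, or replace the final contradiction ``violates $\tau\ket{\phi}=0$'' by ``forces $\sigma=\oprod{\phi}{\phi}$,'' and the proof stands.
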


To see why this is the case, an arbitrary pure state $\ket{\phi}$ of
this system can be written as
\begin{eqnarray}
\label{eq:phi123}
\ket{\phi}_{123} = \sum\limits_{i=1}^{d_{1}} \sum\limits_{j=1}^{d_{2}}
\sum\limits_{k=1}^{d_{3}} c_{ijk}\ket{i}_{1}\ket{j}_{2}\ket{k}_{3}.
\end{eqnarray}
If there is another state $\rho$ which agrees with $\ket{\phi}$ in its
subsystems $\{1,2\}$ and $\{1,3\}$, then we can find a pure state
$\ket{\psi}_{1234}\in\mathcal{H}_{d_1}\otimes \mathcal{H}_{d_2}\otimes
\mathcal{H}_{d_3}\otimes \mathcal{H}_{d_4}$ which agrees with $\rho$
on the subsystem $\{1,2,3\}$ and also agrees with $\ket{\phi}_{123}$
in subsystems $\{1,2\}$ and $\{1,3\}$.

Since the rank of the $2$-RDM of the subsystem $\{1,2\}$ is at most
$d_{3}$, the pure state $\ket{\psi}_{1234}$ can be written as a
superposition of $\ket{v_{l}}\ket{E_{l}}$ as follows.
\begin{eqnarray}
\label{eq:psi1234}
\ket{\psi}_{1234}=\sum\limits_{l=1}^{d_{3}} \ket{v_{l}}\ket{E_{l}}
\end{eqnarray}
where
\begin{eqnarray}
\ket{v_{l}}=\sum\limits_{i=1}^{d_{1}}\sum\limits_{j=1}^{d_{2}}c_{ijl}\ket{i}_1\ket{j}_2
\end{eqnarray}
for any $1\leq l\leq d_{3}$. Here $\{\ket{E_l}\}_{i=1}^{d_{3}}$ will
be vectors (perhaps unnormalized) in $\mathcal{H}_{d_3}\otimes
\mathcal{H}_{d_4}$.

The states $\{\ket{E_l}\}_{i=1}^{d_{3}}$ can be chosen to be
orthonormal vectors in the subsystem $\mathcal{H}_{d_3}\otimes
\mathcal{H}_{d_4}$, and then for almost all states $\ket{\phi}$, the
set of $\{\ket{v_l}\}_{i=1}^{d_{3}}$ will be linearly independent. Let
us write $\ket{E_{l}} = \sum\limits_{k=1}^{d_{3}} \ket{k}_{3}
\ket{e_{lk}}_{4}$. For any $1\leq l\leq d_{3}$, we will have
\begin{eqnarray}
\ket{\psi}_{1234} = \sum\limits_{i=1}^{d_{1}}
\sum\limits_{j=1}^{d_{2}} \sum\limits_{k,l=1}^{d_{3}}
c_{ijl}\ket{i}_{1} \ket{j}_2\ket{k}_{3}\ket{e_{lk}}_{4}.
\end{eqnarray}

Now let's consider the subsystem $\{1,3\}$. Since $\ket{\phi}_{123}$
and $\ket{\psi}_{1234}$ have the same RDMs for particles $\{1,3\}$,
this gives
\begin{eqnarray}
\label{eq:RDM12}
\tr_{2}\ket{\phi}\bra{\phi}=\tr_{\{2,4\}}\ket{\psi}\bra{\psi}.
\end{eqnarray}

Substituting Eqs.~\eqref{eq:phi123} and~\eqref{eq:psi1234} into
Eq.~\eqref{eq:RDM12}, and comparing each matrix element, results in
the following equalities (for all $m, m', n, n'$):
\begin{equation}
\label{eq:eijkl}
\sum_{j=1}^{d_2}c_{mjn}c^{*}_{m'jn'} = \sum_{j=1}^{d_2}
\sum_{k,k'=1}^{d_3}c_{mjk}c^{*}_{m'jk'} \langle e_{k'n'}\ket{e_{kn}}.
\end{equation}

Now let us define $x_{ijkl}=\langle e_{ij}\ket{e_{kl}}$. Then
Eq.~\eqref{eq:eijkl} is a linear equation system with variables
$x_{ijkl}$. It is not hard to verify that
\begin{equation}
\label{eq:solution}
x_{ijkl} = \left\{ \begin{array}{ll}
1 & \textrm{if $i=j, k=l$}\\
0 & \textrm{otherwise}
\end{array} \right.
\end{equation}
is a solution to the equation system, which corresponds to the state
$\ket{\phi}_{123}$.

Now we need to show that when $d_1\geq\min\{d_2,d_3\}$,
Eq.~\eqref{eq:eijkl} has only one solution which is given by
Eq.~\eqref{eq:solution}. It turns out that this is indeed the case
which then proves Theorem~\ref{th:RDM1}. In fact, the linear equations
above are generically linearly independent. To see this, let's fix $n,
n^{\prime}$ and $m, m^{\prime}$, the right-hand side of
Eq.~\eqref{eq:eijkl} is $\sum\limits_{k,k'=1}^{d_3}
\bra{\alpha}_{m'k'} \cdot \ket{\alpha}_{mk} x_{k'n'kn}$ where
$\ket{\alpha}_{mk}=\sum\limits_{j=1}^{d_2} c_{mjk}\ket{j}$. Then the
coefficient matrix can be written as the following:
\begin{eqnarray}
\left(
\begin{array}{cccc}
\bra{\alpha}_{11}\ket{\alpha}_{11}  &  \bra{\alpha}_{11}\ket{\alpha}_{12} & \cdots  &\bra{\alpha}_{1d_3}\ket{\alpha}_{1d_3}   \\
\bra{\alpha}_{11}\ket{\alpha}_{21}  &  \bra{\alpha}_{11}\ket{\alpha}_{22} & \cdots  &\bra{\alpha}_{1d_3}\ket{\alpha}_{2d_3}   \\
\vdots  & \vdots  &  \ddots & \vdots \\
\bra{\alpha}_{d_11}\ket{\alpha}_{d_11}  &  \bra{\alpha}_{d_11}\ket{\alpha}_{d_12} & \cdots  &\bra{\alpha}_{d_1d_3}\ket{\alpha}_{d_1d_3}   \\
\end{array}
\right).
\end{eqnarray}
The $(d_1(i-1)+j,d_3(p-1)+q)$ entry in the above matrix is
$\bra{\alpha}_{ip}\ket{\alpha}_{jq}$.

If there are more than $1$ solutions, then the determinant of the
above matrix should be zero. Note that the determinant can be written
as a polynomial of $c_{mjk}$'s and $c_{m'jk'}^{\ast}$'s. Since $\prod
c_{iii}$ appears only once in the polynomial, the determinant of the
top $d_3^2$ by $d_3^2$ submatrix must be non-zero generically.
Therefore, $d_1^2d_3^2$ linear equations are sufficient to determine
$d_3^4$ variables.

However, we do not know whether the sufficient condition given by
Theorem~\ref{th:RDM1} for almost all three-particle pure state to be
UDA by its $2$-RDMs of particle pairs $\{1,2\}$ and $\{1,3\}$ is also
necessary. This still leaves a gap between the result of
Theorem~\ref{th:RDM1} for UDA, and the result for UDP in~\cite{Dio04}.
They both only coincide when $d_1=d_2=d_3=2$, i.e., the three qubit
case. It remains open for other cases, whether UDP can imply UDA.

Following a similar discussion as in Sec. II, our result in this
section also extends to uniqueness of low rank quantum states. In
particular, we have the following theorem.

\begin{theorem}
\label{th:RDM2}
Almost every tripartite density operator $\rho$ acting on the Hilbert
space $\mathcal{H}_{d_1} \otimes \mathcal{H}_{d_2} \otimes
\mathcal{H}_{d_3}$ with rank no more than $\lfloor
\frac{d_{1}}{d_{3}}\rfloor$ can be uniquely determined among all
states by its $2$-RDMs of particle pairs $\{1,2\}$ and $\{1,3\}$.
\end{theorem}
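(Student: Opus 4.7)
The plan is to extend the Schmidt-decomposition argument underlying Theorem~\ref{th:RDM1} to mixed rank-$q$ states by passing to purifications. Fix $\rho_{123}$ of rank $q\le\lfloor d_1/d_3\rfloor$ and suppose $\sigma_{123}$ is any state with $\sigma_{12}=\rho_{12}$ and $\sigma_{13}=\rho_{13}$; the goal is to derive $\sigma=\rho$. The first step is to introduce purifications $\ket{\phi}_{1234}$ of $\rho$ (taking $d_4=q$) and $\ket{\psi}_{1235}$ of $\sigma$ (with $d_5$ large enough).

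Because $\rho_{12}=\sigma_{12}$, I would align the Schmidt decompositions of the two purifications across the bipartition $\{1,2\}$ versus the rest so that both use the same eigenvectors $\ket{e_l}_{12}$ and eigenvalues $\lambda_l$ of $\rho_{12}$:
\begin{equation*}
\ket{\phi}=\sum_{l=1}^{r}\sqrt{\lambda_l}\,\ket{e_l}_{12}\ket{f_l}_{34},\quad
\ket{\psi}=\sum_{l=1}^{r}\sqrt{\lambda_l}\,\ket{e_l}_{12}\ket{g_l}_{35},
\end{equation*}
where $r=\rank(\rho_{12})\le qd_3$ and the families $\{\ket{f_l}\}$, $\{\ket{g_l}\}$ are orthonormal in the respective ancillas. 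Imposing the remaining constraint $\rho_{13}=\sigma_{13}$ by equating $\tr_{2,4}\ket{\phi}\bra{\phi}=\tr_{2,5}\ket{\psi}\bra{\psi}$ then produces the single operator identity
\begin{equation*}
\sum_{l,l'=1}^{r}\sqrt{\lambda_l\lambda_{l'}}\,E_{ll'}\otimes(F_{ll'}-G_{ll'})=0
\end{equation*}
in $M_{d_1}(\C)\otimes M_{d_3}(\C)$, where $E_{ll'}:=\tr_2\ket{e_l}\bra{e_{l'}}$, $F_{ll'}:=\tr_4\ket{f_l}\bra{f_{l'}}$, and $G_{ll'}:=\tr_5\ket{g_l}\bra{g_{l'}}$. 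If the operators $\{E_{ll'}\}_{l,l'=1}^{r}$ are linearly independent in $M_{d_1}(\C)$, this forces $F_{ll'}=G_{ll'}$ for every pair, and substituting back yields $\rho_{123}=\sum_{l,l'}\sqrt{\lambda_l\lambda_{l'}}\ket{e_l}\bra{e_{l'}}\otimes F_{ll'}=\sigma_{123}$, which is the desired conclusion.

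The main obstacle is establishing this generic linear independence of the $r^2$ operators $\{E_{ll'}\}$ inside the $d_1^2$-dimensional space $M_{d_1}(\C)$. The rank hypothesis $q\le\lfloor d_1/d_3\rfloor$ supplies the necessary dimensional count $r\le qd_3\le d_1$, so that $r^2\le d_1^2$. Identifying each $\ket{e_l}_{12}$ with a $d_1\times d_2$ matrix $M_l$, one has $E_{ll'}=M_lM_{l'}^{\dagger}$, and linear independence becomes the nonvanishing of a polynomial in the entries of $M_1,\ldots,M_r$. I would finish by exhibiting a single rank-$q$ state whose $\rho_{12}$ admits eigenvectors of the form $\ket{a_l}_1\ket{0}_2$ for $r$ distinct indices $a_l\in\{1,\ldots,d_1\}$: then $E_{ll'}=\ket{a_l}\bra{a_{l'}}$ are manifestly linearly independent matrix units, so the relevant polynomial is not identically zero and hence is nonvanishing on a Zariski-open dense subset of rank-$q$ states, in the same spirit as the monomial-picking step at the end of the proof of Theorem~\ref{th:RDM1}.
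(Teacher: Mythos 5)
Your proposal is correct in substance and arrives at the same conclusion, but it packages the argument differently from the paper's Appendix B. The paper also purifies $\rho$ and $\sigma$ and expands the second purification over a spanning set of $\supp\rho_{12}$, but it then converts \emph{both} marginal constraints into two large linear systems in the Gram-matrix entries of the ancilla vectors and finishes by counting: generically $d_1^2d_3^2+d_1^2d_2^2$ independent equations against $d_3^4d_4^2\le d_1^2d_3^2$ unknowns. You instead absorb the $\{1,2\}$ constraint entirely into the aligned Schmidt decompositions, so that only the single operator identity $\sum_{l,l'}\sqrt{\lambda_l\lambda_{l'}}\,E_{ll'}\otimes(F_{ll'}-G_{ll'})=0$ survives, and the genericity statement collapses to the clean claim that the $r^2\le d_1^2$ operators $E_{ll'}=\tr_2\ket{e_l}\bra{e_{l'}}$ are linearly independent; the rank hypothesis enters in exactly the same place in both proofs, via $r\le qd_3\le d_1$. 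Your version is tidier and, unlike the paper's, supplies an explicit witness (the matrix-unit configuration) for the genericity. One step needs tightening: eigenvectors of $\rho_{12}$ are not polynomial functions of $\rho_{123}$, so linear independence of the $E_{ll'}$ is not literally ``the nonvanishing of a polynomial in the state'' as you phrase it. The fix is to note that the condition depends only on the subspace $\supp\rho_{12}$ --- replacing $M_l$ by $\sum_k U_{lk}M_k$ for invertible $U$ acts on the family $\{M_lM_{l'}^\dagger\}$ by the invertible map $U\otimes\bar{U}$ --- and then to test it on the spanning set $(I_{12}\otimes\bra{k}_3\bra{j}_4)\ket{\Phi}$, which \emph{is} linear in a purification $\ket{\Phi}$ of $\rho$; with that substitution the Zariski-open-dense argument goes through and your witness closes the proof.
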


This result is to our knowledge, the first one for uniqueness of mixed
states with respect to RDMs. The proof is a direct extension of that
for Theorem~\ref{th:RDM1}, but with more lengthy details that we
will include in Appendix B.

Let us look at some consequences of Theorem~\ref{th:RDM2}. Consider a
four qubit system with qubits $1,2,3,4$, and look at the qubits $3,4$
as a single systems $3'$. Then Theorem~\ref{th:RDM2} says also that
almost all four qubit states of rank $2$ are UDA by their RDMs of
particles $\{1,2\}$ and $\{1,3'\}=\{1,3,4\}$, or one can say that
almost all four qubit states of rank $2$ are UDA by their $3$-RDMs.
This is indeed consistent with the multipartite result in~\cite{JL05}
which states that almost all four-qubit pure states are UDA by their
$3$-RDMs, and our result is indeed stronger. This demonstrates
that our analysis naturally extends to systems of more than
$3$-particles. We also remark that the rank of a state $\rho$ which
could be UDA by its $k$-RDMs needs to be relatively low, otherwise one
can always find another state $\rho'$ with lower rank which has the
same $k$-RDMs as those of $\rho$~\cite{JJK+12}.

\section{IV. The case of only two observables}

In Sec. II and Sec. III, we discussed the difference and coincidence
between the two kinds of uniqueness for pure states, UDA and UDP,
which in general are not the same thing. However, in certain
interesting circumstances such as the three qubit case with respect to
$2$-RDMs, and in general the $n$-qubit case respect to $(n-1)$-RDMs,
they do coincide. Starting from this section we would like to build
some general understanding of the circumstances when UDP implies UDA
for all pure states.

We start from the simplest case of $m=2$, where only two observables
are measured, i.e., $\mathbf{A}=(A_1,A_2)$. Intuitively, in this
extreme case almost no pure state can be uniquely determined, either
UDA or even UDP. However there are also exceptions. For instance, if
one of the observables, say $A_1$, has a nondegenerate ground state
$\ket{\psi}$, then $\ket{\psi}$ is UDA (hence, of course, UDP) even by
measuring $A_1$ only. One would hope this is the only exception, that
is, for a pure state $\ket{\psi}$, either it is UDA, or it is not UDP,
when only two observables are measured. We make this intuition
rigorous by the following theorem.

\begin{theorem}
\label{th:Num}
When only two observables are measured, i.e., $\mathbf{A}=(A_1,A_2)$,
UDP implies UDA for any pure state $\ket{\psi}$, regardless of the
dimension $d$.
\end{theorem}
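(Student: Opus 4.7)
The plan is to exploit the Toeplitz--Hausdorff theorem: identifying $\R^2$ with $\C$, the joint numerical range $W_2(\mathbf{A})$ coincides with the usual numerical range $W(A_1+iA_2)$, hence is convex, and therefore already equals $C_2(\mathbf{A})$. The proof proceeds by contraposition: assume UDP holds for $\ket{\psi}$, and suppose some density operator $\rho\neq\oprod{\psi}{\psi}$ produces the same two expectation values as $\ket{\psi}$. Fix any pure-state decomposition $\rho=\sum_j p_j\oprod{\phi_j}{\phi_j}$ with $p_j>0$. Writing $f:\ket{\phi}\mapsto(\bra{\phi}A_1\ket{\phi},\bra{\phi}A_2\ket{\phi})$ and $p:=f(\ket{\psi})$, linearity of the trace gives $p=\sum_j p_j f(\ket{\phi_j})$.

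Next I split into two cases. If $f(\ket{\phi_j})=p$ for every $j$, then UDP forces $\oprod{\phi_j}{\phi_j}=\oprod{\psi}{\psi}$ for each $j$, so $\rho=\oprod{\psi}{\psi}$, contradicting our assumption. Otherwise the convex combination is nontrivial, so there exist distinct points $q_1\neq q_2$ in $W_2(\mathbf{A})$ with $p$ lying strictly between them on the segment $[q_1,q_2]$. Choose pure states $\ket{\chi_1},\ket{\chi_2}$ with $f(\ket{\chi_i})=q_i$ and restrict attention to the two-dimensional subspace $V=\mathrm{span}(\ket{\chi_1},\ket{\chi_2})$. Applying Toeplitz--Hausdorff to the $2\times 2$ compression $A_1|_V+iA_2|_V$ shows that $W_2(\mathbf{A}|_V)$ is a convex planar set containing $q_1$ and $q_2$, hence the segment $[q_1,q_2]$, hence $p$ itself.

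The crux is then to produce a second pure-state preimage of $p$ inside $V$. Here I invoke the elliptical range theorem for $2\times 2$ matrices: $W(A_1|_V+iA_2|_V)$ is either a line segment (when the two compressions commute) or a filled ellipse (otherwise); the degenerate point case is ruled out by $q_1\neq q_2$. In the segment case, expanding pure states of $V$ in the common eigenbasis shows that $f$ depends only on the polar angle on the Bloch sphere of $V$, so every relative-interior point of the segment, in particular $p$, has an entire longitudinal circle of preimages. In the ellipse case, a standard analysis (either an explicit parametrization, or a smooth-map dimension count observing that $S^2$ cannot map injectively onto a planar disk) shows that each interior point of the ellipse has multiple preimages. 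Either way, $p$ admits at least two preimages in $V$, at least one of which differs from $\ket{\psi}$ regardless of whether $\ket{\psi}$ happens to lie in $V$, contradicting UDP. The main obstacle is precisely this last 2D preimage-counting step: one has to dispatch the degenerate segment case and the non-normal ellipse case separately, and verify that the multiplicity is genuinely at least two so that the witness constructed is not absorbed by $\ket{\psi}$.
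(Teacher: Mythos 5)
Your proof is correct, but it takes a genuinely different route from the paper's. The paper likewise starts from the Hausdorff--Toeplitz theorem to get $W_2(\mathbf{A})=C_2(\mathbf{A})$, but then invokes a theorem of Embry (Proposition~1 of Appendix~C): $\vec{x}=\mathbf{A}(\ket{\psi})$ is an extreme point of $W_2(\mathbf{A})$ if and only if the set of pure-state preimages of $\vec{x}$ is, up to scalars, a linear subspace. UDP makes that set a single line, so $\vec{x}$ is extreme in $W_2=C_2$; then for any $\rho=\sum_i p_i\oprod{\psi_i}{\psi_i}$ with $\mathbf{A}(\rho)=\vec{x}$, extremality forces $\mathbf{A}(\ket{\psi_i})=\vec{x}$ for every $i$, and UDP gives $\rho=\oprod{\psi}{\psi}$. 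Your first case is exactly that last step; your second case replaces the citation of Embry with a self-contained argument, compressing to a two-dimensional subspace and using the elliptical range theorem to manufacture a second pure preimage of $p$. In effect you re-prove the relevant half of Embry's result (and the ``strong convexity'' property the paper isolates in Appendix~C), which buys a more elementary, self-contained proof at the cost of the 2D case analysis. Two details there deserve one more line each: (i) in the ellipse case you need $p$ to lie in the \emph{interior} of the compressed numerical range (a boundary point of a nondegenerate ellipse has a unique preimage); this does follow, since a nondegenerate filled ellipse is strictly convex and $p$ lies on the open segment between two of its points, but it should be said; (ii) the ``$S^2$ cannot inject into a planar disk'' dimension count only yields non-injectivity somewhere, not multiplicity at the specific point $p$, so you should lean on the explicit parametrization: on the Bloch sphere of $V$ the map $f$ is affine, and every interior point of the image of a sphere under a rank-two affine map has exactly two preimages.
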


To prove this theorem, recall that measuring $\mathbf{A}$ (i.e.,
measuring every observable in $\mathbf{A}$) for all quantum states
$\rho$ returns the set $C_m(\mathbf{A})$ given by Eq.~\eqref{eq:Cm}.
We know that $C_m(\mathbf{A})$ is a convex set, meaning for any
$\vec{x},\vec{y}\in C_m(\mathbf{A})$, we have
$(1-s)\vec{x}+s\vec{y}\in C_m(\mathbf{A})$ for any $0<s<1$.

For pure states, the corresponding set of average values is given by
$W_m(\mathbf{A})$ as defined in Eq.~\eqref{eq:Wm}. Unlike
$C_m(\mathbf{A})$, $W_m(\mathbf{A})$ in general is not convex.
Nevertheless, it is easy to see that $W_m(\mathbf{A})=C_m(\mathbf{A})$
when $W_m(\mathbf{A})$ is convex.

For $m=2$, the Hausdorff--Toeplitz theorem
\cite{Toeplitz:1918fv,Hausdorff:1919ip} gives convexity of the
numerical range of any operator, which in turn shows that
$W_2(\mathbf{A})$ is convex. We explain it briefly here. For any
operator $B$ acting on a Hilbert space $\mathcal{H}_d$, the numerical
range of $B$ is the set of all complex numbers
$\bra{\psi}B\ket{\psi}$, where $\ket{\psi}$ ranges over all pure
states in $\mathcal{H}_d$.

Note that one can always write $B$ as
\begin{eqnarray}
B&=&\frac{1}{2}[(B+B^{\dag})+(B-B^{\dag})]\nonumber\\
&=&\frac{1}{2}[(B+B^{\dag})+i(-iB+iB^{\dag})].
\end{eqnarray}
If we define $A_1:=(B+B^{\dag})/2$ and $A_2:=(-iB+iB^{\dag})/2$ then
clearly both $A_1$ and $A_2$ are Hermitian. Then $W_2(\mathbf{A})$ is
nothing but the numerical range of $B = A_1 + iA_2$ and hence is
convex.

Furthermore, by studying the properties of the numerical range, it was
shown in~\cite{Embry:1970up} (using different terminology) that if a
pure state $\ket{\psi}$ is UDP, the point
$\vec{x}:=\mathbf{A}(\ket{\psi})$ must be an extreme point of
$W_2(\mathbf{A})$. Here $\vec{x}$ is an extreme point of the convex
set $W_2(\mathbf{A})$ if there do not exist $\vec{y},\vec{z}\in
W_2(\mathbf{A})$, such that $\vec{x} = (1-s)\vec{y}+s\vec{z}$ for some
$0<s<1$.

Because $W_2(\mathbf{A})=C_2(\mathbf{A})$, $\vec{x}$ is also an
extreme point of $C_2(\mathbf{A})$. One can further show that for any
extreme point $\vec{x}$ of $C_2(\mathbf{A})$, and any quantum state
$\rho$ with $\mathbf{A}(\rho)=\vec{x}$, any pure quantum state
$\ket{\phi}$ in the range of $\rho$ will also have
$\mathbf{A}(\ket{\phi})=\vec{x}$. This then implies that if a pure
state $\ket{\psi}$ is UDP by measuring $\mathbf{A}$, it must also be
UDA, which proves the theorem.

Again, all the technical details of the proof will be presented in
Appendix C.

In an attempt to extend Theorem~\ref{th:Num} to the $m \geq 3$ case, a
natural question that one could ask is whether or not UDP implies UDA
whenever $W_m(\mathbf{A})$ is convex. Unfortunately this is not the
case, as demonstrated by the following example.

For the qutrit case ($d=3$), consider the observables
$\mathbf{A}=(M_1, M_2, M_3)$, where the $M_i$s are the Gell-Mann
matrices given in Eq.~\eqref{eq:GellMan}. These are the Pauli
operators embedded in the qutrit space. It is easily verified that in
this case, $W_m(\mathbf{A})$ is the Bloch sphere together with its
interior and is thus convex. Nonetheless, the unique pure state
compatible with measurement result $(0,0,0)$ is the state $\ket{2}$,
even though there are many mixed states sharing this measurement
result, such as $\frac{1}{2}(\oprod{0}{0} + \oprod{1}{1})$.

Therefore, although the Hausdorff--Toeplitz theorem
\cite{Toeplitz:1918fv,Hausdorff:1919ip} is famous for showing the
convexity of numerical range of any operator, there is indeed a deeper
reason than just the convexity of the numerical range which governs
the validity of Theorem~\ref{th:Num}. We leave the more detailed
discussion to Appendix~C.

\section{V. Symmetry of the state space}

In this section, we discuss some circumstances where UDP implies UDA
in a more general context where more than two observables are
measured, i.e., $m>2$. Our focus is on the symmetry of the set of all
quantum states. For a $d$-dimensional Hilbert space $\mathcal{H}_d$ we
denote this set of states by $K_d$, that is
\begin{equation}
K_d=\{\rho \mid \rho\ \text{acts}\ \text{on}\ \mathcal{H}_d,
\tr(\rho)=1 \}.
\end{equation}
Note that $K_d$ is convex, as we know that for any $\rho_1,\rho_2\in
K_d$, $(1-s)\rho_1+s\rho_2\in K_d$ for all $0<s<1$. Furthermore, the
extreme points of $K_d$ are all the pure states. $K_d$ is also called
the state space for all the operators acting on $\mathcal{H}_d$.

We now explain the intuition. If $K_d$ has a certain symmetry, then
two pure states $\ket{\psi_1}$ and $\ket{\psi_2}$ that are `connected'
by the symmetry will give the same measurement results, and states
$\ket{\psi}$ fixed by the symmetry will also be fixed by the
projection onto the space of observables. In this situation, UDP
implies UDA for all pure states.

To make this intuition concrete, let us first consider an example for
$d=2$, i.e., the qubit case. We know that $K_d$ can be parameterized
as in Eq.~\eqref{eq:rho}, where for $d= 2$,
$\lambda_1=X,\lambda_2=Y,\lambda_3=Z$ are chosen as Pauli matrices
given in Eq.~\eqref{eq:Pauli}. Here $K_d$ is the Bloch ball as shown
in FIG.~\ref{fig:example}. The Bloch ball is clearly a convex set and
the extreme points are those pure states on the boundary, which give
the Bloch sphere.
\begin{figure}[htb]
\includegraphics[scale=0.7]{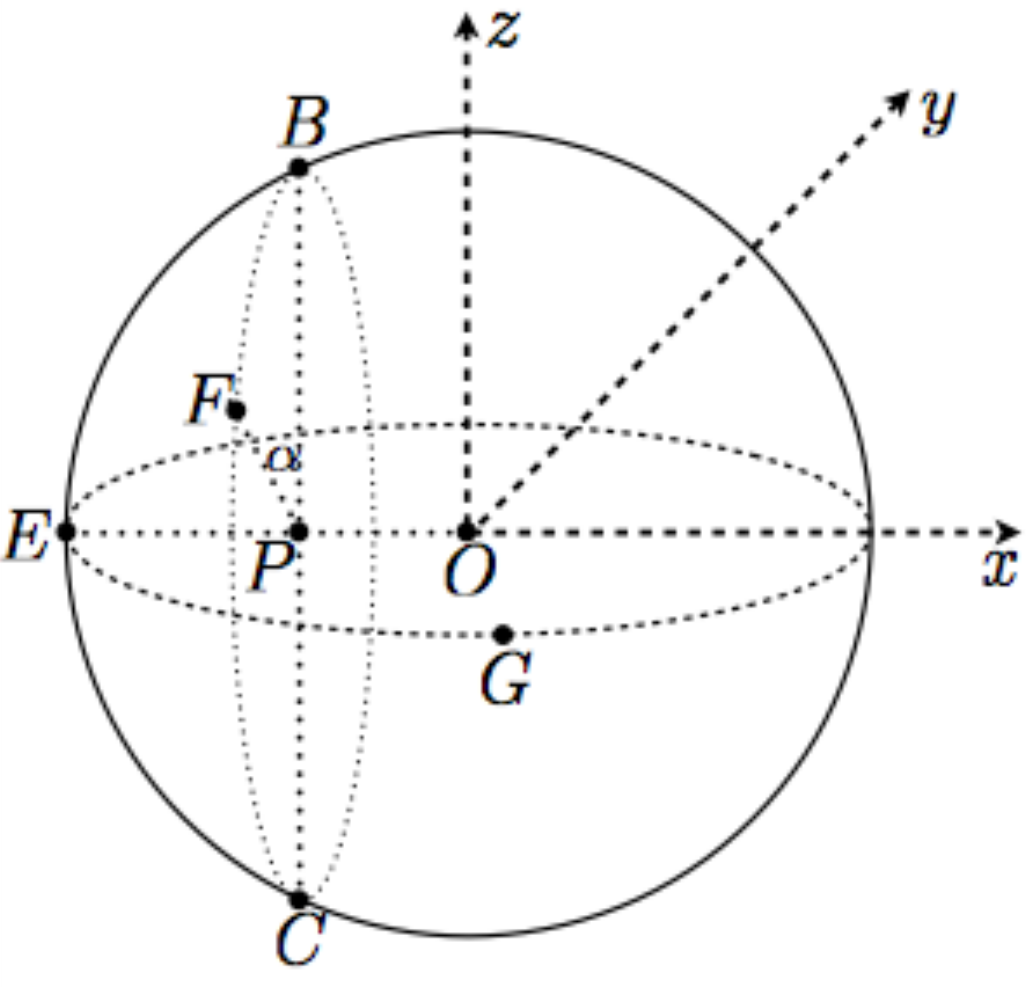}
\caption{Symmetry of the Bloch ball}
\label{fig:example}
\end{figure}

We know that geometrically, measuring the observables in $\mathbf{A}$
corresponds to the projection onto the plane spanned by $\mathbf{A}$.
For example, if we measure the Pauli $X$ and $Y$ operators, then
geometrically this corresponds to the projection of the Bloch ball
onto the $xy$ plane. Since the Bloch ball has reflection symmetry with
respect to the $xy$ plane, two pure states (e.g. points $B$ and $C$)
connected by that symmetry will project onto the same measurement
result $P$, as will all mixtures of $B$ and $C$. Hence neither UDP nor
UDA hold for such pure states for measuring $X$ and $Y$. On the other
hand, pure states fixed by the reflection symmetry are also fixed by
the projection onto the $xy$ plane. These are precisely the points on
the Bloch sphere that are in the $xy$ plane (e.g. the points $E$ and
$G$ in FIG.~\ref{fig:example}), and for such pure states both UDP and
UDA hold. Therefore, for the observables $X, Y$ we conclude that UDP =
UDA.

Now let us look at another case where we only measure the Pauli $X$
operator. Consider the group of symmetries of the Bloch ball
consisting of rotation around the $x$ axis. (Rotation by angle
$\alpha$, is shown in FIG.~\ref{fig:example}. In that figure, point
$B$ will become point $F$ after this particular rotation, and indeed
both points $B$ and $F$ yield the same measurement result, which is
represented by point $P$ on the $x$ axis.) Note that two points on the
Bloch sphere will project to the same measurement result on the $x$
axis if and only if they are in the same orbit under the rotation
group. Thus a measurement result will come from a single pure state
exactly when that pure state is a fixed point, and hence either both
or neither of UDP and UDA hold for each pure state. For example, the
point $E$ is fixed by the rotation, and is uniquely determined by the
measurement of $X$ among all states. $E$ corresponds to the $-1$
eigenstate of the Pauli $X$ operator. Therefore, the rotational
symmetry of the Bloch ball along the $x$ axis gives UDP =UDA for any
pure state when measuring the Pauli $X$ operator, which corresponds to
the $x$ axis.

Mathematically, a symmetry of $K_d$ is an affine automorphism of
$K_d$. If $U \in M_d$ is unitary, the map taking $\rho$ to $U\rho
U^\dagger$ is such an affine automorphism (which for $d = 2$ will just
be rotation around some axis of the Bloch ball). For instance, the
rotation symmetry along the $x$ axis by an angle $\alpha$ is given by
conjugation by the unitary operator $\exp(-iX\alpha/2)$. If $V$ is the
conjugate linear map given by complex conjugation in the computational
basis ($V\ket{\psi} = \ket{\psi^*}$), then the map taking $\rho$ to
$V\rho V^\dagger$ is the transpose map. For $d = 2$, this map is
reflection of the Bloch ball in the xy-plane.

Recall that for a set of observables $\mathbf{A}$, we denote the real
linear span by $\mathcal{S}(\mathbf{A})$. When discussing the
uniqueness problems, it makes no difference if we append the identity
operator to $\mathbf{A}$. Let us then assume $\mathbf{A}=(I,
A_1,\ldots, A_m)$. We are now ready to put our intuition into a
theorem.
\begin{theorem}
\label{th:symmetry}
Assume there exists a compact group $G$ of affine automorphisms of
$K_d$ whose fixed point set is $K_d\bigcap S(\mathbf{A})$. Then each
pure state acting on $\mathcal{H}_d$ which is UDP for measuring
$\mathbf{A}$ is also UDA.
\end{theorem}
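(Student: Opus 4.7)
The plan is to use Haar-measure averaging over the compact group $G$ to reduce any competing state to a $G$-invariant one, and then to combine this with the extremality of pure states and the UDP hypothesis. The essential preliminary is that every affine automorphism of $K_d$ is either a unitary conjugation $\rho \mapsto U\rho U^\dagger$ or an antiunitary conjugation $\rho \mapsto U\rho^T U^\dagger$; both are Hilbert--Schmidt isometries, so each $g \in G$ extends to an orthogonal linear transformation of the real Hilbert space $V$ of Hermitian $d\times d$ matrices, satisfying $g^* = g^{-1}$. Write $V^G \subseteq V$ for the linear subspace of $G$-fixed Hermitian matrices.

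The main obstacle is to upgrade the hypothesis ``the $G$-fixed points in $K_d$ equal $K_d \cap \mathcal{S}(\mathbf{A})$'' into the linear identity $V^G = \mathcal{S}(\mathbf{A})$. Since $G$ preserves the centroid of the bounded convex body $K_d$, the maximally mixed state $I/d$ is $G$-fixed, and since $I/d$ lies in the relative interior of $K_d$ inside the trace-one slice $V_1$, a small open ball around $I/d$ in $V_1$ is contained in $K_d$. Intersecting this ball with $\mathcal{S}(\mathbf{A})$ and with $V^G$ shows that both $K_d \cap \mathcal{S}(\mathbf{A})$ and the set $K_d \cap V^G$ of $G$-fixed states have full dimension in their respective ambient affine subspaces; equating the affine spans of the two (equal) sides of the hypothesis then gives $\mathcal{S}(\mathbf{A}) \cap V_1 = V^G \cap V_1$, and because both $\mathcal{S}(\mathbf{A})$ and $V^G$ contain $I$, this lifts to $\mathcal{S}(\mathbf{A}) = V^G$.

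With $V^G = \mathcal{S}(\mathbf{A})$ in hand, orthogonality of $G$ yields the key invariance statement: for every $A \in \mathcal{S}(\mathbf{A})$ and every state $\rho$, $\tr(A\, g(\rho)) = \tr(g^{-1}(A)\,\rho) = \tr(A\rho)$. Hence the averaged state $\bar\rho := \int_G g(\rho)\,dg$ has the same measurement vector as $\rho$ and is $G$-invariant, so it lies in $K_d \cap \mathcal{S}(\mathbf{A})$. Finally, suppose $\ket\psi\bra\psi$ is UDP and $\rho$ shares its measurement results. Each $g(\ket\psi\bra\psi)$ is pure with the same measurement vector as $\ket\psi\bra\psi$, so UDP forces $g(\ket\psi\bra\psi) = \ket\psi\bra\psi$ for every $g \in G$; hence $\ket\psi\bra\psi \in V^G = \mathcal{S}(\mathbf{A})$. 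Then $\bar\rho$ and $\ket\psi\bra\psi$ both lie in $\mathcal{S}(\mathbf{A})$ with identical measurement vectors, and because $X \mapsto (\tr A_1 X, \ldots, \tr A_m X)$ is injective on $\mathcal{S}(\mathbf{A})$, $\bar\rho = \ket\psi\bra\psi$. Since $\ket\psi\bra\psi$ is an extreme point of $K_d$ while $\bar\rho$ is a Haar convex combination of the states $g(\rho)$, extremality forces $g(\rho) = \ket\psi\bra\psi$ for every $g$; evaluating at $g = e$ gives $\rho = \ket\psi\bra\psi$, proving UDA.
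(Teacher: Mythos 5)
Your proof is correct and follows essentially the same route as the paper's: both rest on Haar averaging over $G$ (the average being the orthogonal projection onto the fixed subspace, which the hypothesis identifies with $\mathcal{S}(\mathbf{A})$), both use UDP to force the pure state itself to be $G$-fixed and hence to lie in $\mathcal{S}(\mathbf{A})$, and both finish with an extremality argument at that pure state. The only cosmetic difference is in the last step, which the paper carries out as an explicit Cauchy--Schwarz computation $1 = \langle P\sigma, \rho\rangle \le 1$, whereas you invoke the equivalent fact that a probability measure on $K_d$ whose barycenter is an extreme point must be a point mass.
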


In the first example above, the group for the reflection consists of
the two element group generated by the reflection. In the rotation
example, we can take the group to consist of all rotations around the
given axis. We will leave the detailed mathematical proof of Theorem
\ref{th:symmetry} to Appendix D, where operator algebras are one
ingredient of the proof.

To motivate some further consequences of Theorem~\ref{th:symmetry},
consider a simple example. If $\mathbf{A}$ consists of a basis of
diagonal matrices (i.e., a set of mutually commuting observables),
then for any pure state, UDP implies UDA by Theorem~\ref{th:symmetry}.
Here the group of symmetries can be taken to be conjugation by all
diagonal unitaries. This group has fixed point set $K_d\bigcap
S(\mathbf{A})$. In a more general case, if the complex span of $
S(\mathbf{A})$ is a *-subalgebra of the operators acting on
$\mathcal{H}_d$, then UDP = UDA for all pure states for measuring
$\mathbf{A}$. This is a natural corollary of Theorem~\ref{th:symmetry}
that we will also discuss in detail in Appendix D.

\section{VI. Conclusion and Discussion}

In this work, we have discussed the uniqueness of quantum states
compatible with given results for measuring a set of observables. For
a given pure state, we consider two different types of uniqueness, UDP
and UDA. We have taken the first step to study their relationship
systematically. In doing so we have established a number of results,
but also leave with many open questions.

First of all, although in general UDP and UDA are evidently different
concepts, their difference is surprisingly `not that large'.
Specifically in the sense of general counting of the number of
variables one needs to measure to uniquely determine all pure states
in a $d$ dimensional Hilbert space. Compared to full quantum
tomography which requires $d^2-1$ variables measured to uniquely
determine any quantum state, the $5d-7$ observables we have
constructed to uniquely determine any pure state among all states is a
significant improvement. It is indeed larger than the $4d-5$
observables given in \cite{HTW11} to uniquely determine any pure state
among all pure states, but the difference is only linear in $d$. We do
not know whether there could be another construction for which we
could further close the linear difference between UDA and UDP, to
leave only a constant gap for large $d$.

When the Hilbert space is a multipartite quantum system, and the
observables correspond to the RDMs, we focused on the situation when
`almost all pure states' are uniquely determined. We considered a
$3$-particle system with Hilbert space
$\mathcal{H}_d=\mathcal{H}_{d_1}\otimes\mathcal{H}_{d_2}\otimes\mathcal{H}_{d_3}$,
and showed that if $d_1\geq\min(d_2,d_3)$, then almost all pure states
are UDA by their $2$-RDMs of particle pairs $\{1, 2\}$ and $\{1, 3\}$.
This improves the results of \cite{LW02}, where $d_1\geq d_2+d_3-1$ is
required; however it still leaves a gap compared to the Diosi UDP
result which states that for $d_1\geq 2$, almost all pure states are
UDP by their $2$-RDMs of particle pairs $\{1, 2\}$ and $\{1, 3\}$.
Because our proof only gives a sufficient condition for UDA, we do not
know whether it can be further improved. We also do not have an
example showing there is indeed gap between UDA and UDP for almost all
three-particle pure states to be uniquely determined by $2$-RDMs of
particle pairs $\{1, 2\}$ and $\{1, 3\}$.

Finally, we considered situations for which we can show that UDP
implies UDA. These include: (i) the general $2$-qubit system; (ii) the
$3$-qubit system when we consider uniqueness for almost all pure
states and the measurements corresponds to $2$-RDMs; (iii) when only
two observables are measured; and (iv) the observables measured
correspond to some symmetry of the state space. However we do not know
how far we are from enumerating all the possible situations that UDP
implies UDA, when considering uniqueness for all pure states or almost
all pure states. In principle one can even consider the relationship
between UDP and UDA for special subsets of pure states.

We believe our systematic study of the uniqueness of quantum states
compatible with given measurement results shed light on several
aspects of quantum information theory and its connection to different
topics in mathematics. These include quantum tomography and the space
of Hermitian operators, unique ground states of local Hamiltonians and
general solutions to certain linear equations, measurements and
numerical ranges of operators, and the geometric meaning of
measurements and the symmetry of state space. We thus conclude with
several open questions that we believe warrant further investigation.

\section*{Appendix A: Proof of Theorem~\ref{th:UDA} and~\ref{th:UDAq}}

Theorem~\ref{th:UDA} can be implied by the following Lemma.

\begin{lemma}\label{lemma:2+-}
  There exists a set of $m = d^2-5d+6$ linearly independent Hermitian matrices
  $H_1, H_2, \ldots, H_m \in M_d(\C)$, such that the Hermitian matrix
  \begin{equation*}
    \sum_{j=1}^m r_j H_j
  \end{equation*}
  has at least two positive eigenvalues for any nonzero real vector
  $r=(r_j) \in \R^m$.
\end{lemma}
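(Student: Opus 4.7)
The plan is to exhibit $H_1, \ldots, H_m$ explicitly and reduce the eigenvalue condition to controlling the rank of an off-diagonal block in every combination. Following the paper's hint, the construction mimics the diagonal filling of \cite{CMW08} but along the anti-diagonals: each $H_j$ will be a Hermitian matrix whose nonzero entries lie on a prescribed anti-diagonal pattern, chosen so that a single coefficient $r_j$ simultaneously populates several off-diagonal entries (a lone off-diagonal ``spike'' would contribute only one positive eigenvalue, which is why the construction must be more rigid than the $4d-5$ construction of~\cite{HTW11}). By direct counting of free parameters under these support constraints, the resulting family contains exactly $d^2-5d+6 = (d-2)(d-3)$ linearly independent Hermitian matrices.

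The spectral condition can then be reduced to block inertia. After a fixed relabelling of basis vectors, every combination $V=\sum_j r_j H_j$ takes the form
\[
  V = \begin{pmatrix} 0_{(d-2)\times(d-2)} & B \\ B^{*} & C \end{pmatrix},
\]
with $B \in M_{(d-2)\times 2}(\C)$ and $C \in M_{2}(\C)$ Hermitian. The span of the first $d-2$ coordinate vectors is totally isotropic for $V$, and a standard congruence argument (SVD of $B$ followed by a Schur-type elimination on the nonsingular part) yields the inertia bound $p(V) \geq \mathrm{rank}(B)$ and $q(V) \geq \mathrm{rank}(B)$. Hence the lemma reduces to showing that $\mathrm{rank}(B) \geq 2$ for every nonzero coefficient vector $r$.

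The main obstacle is this last rank-$2$ verification. The two columns of $B$ are linear in $r$, and one has to rule out $r \neq 0$ making the columns parallel or both zero. The anti-diagonal filling is designed exactly for this: by staggering which $r_j$'s appear in which column, one argues that simultaneous vanishing of all $2\times 2$ minors of $B$ forces every $r_j = 0$. I expect this step to reduce to a Vandermonde- or incidence-matrix-style non-degeneracy identity, directly analogous to the one used in~\cite{CMW08} to obtain the $4d-5$ UDP construction; the stronger ``$\mathrm{rank}\geq 2$'' requirement replaces ``$\mathrm{rank}\geq 1$'' and imposes the extra $d-2$ constraints that shrink $\mathcal{S}(\mathbf{A})^{\perp}$ by $d-2$ dimensions relative to the UDP perp-space, yielding the $5d-7$ observables of Theorem~\ref{th:UDA}.
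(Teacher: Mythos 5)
Your overall strategy (anti-diagonal filling followed by an inertia argument based on an isotropic block) is in the right spirit, but the central structural claim is false and cannot be repaired as stated. You assert that after one fixed relabelling of the basis, every combination $V=\sum_j r_j H_j$ has a vanishing $(d-2)\times(d-2)$ principal block, i.e.\ that the whole span of the $H_j$ lies inside the space of Hermitian matrices supported on the last two rows and columns. That space has real dimension only $4(d-2)+4=4d-4$, whereas the lemma requires $m=d^2-5d+6$ linearly independent matrices; for $d\geq 8$ this is already a contradiction, and letting the relabelling depend on $r$ does not help, since a subspace of quadratic dimension cannot be covered by the finitely many such subspaces of linear dimension. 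Relatedly, a $d\times d$ Hermitian matrix with a $(d-2)$-dimensional totally isotropic subspace has at most two positive and two negative eigenvalues, so your structure would force every matrix in a $(d-2)(d-3)$-dimensional real subspace to have rank at most $4$ --- far too restrictive. The final step you defer (showing $\mathrm{rank}(B)\geq 2$ for every nonzero $r$) is also only sketched, but the block-structure claim is the fatal gap.

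The paper's proof localizes exactly the idea you are reaching for. The matrices are supported on the anti-diagonals $j+j'=k$ for $3\le k\le 2d-5$, with the entries on each anti-diagonal chosen (via the totally non-singular matrix lemma of \cite{CMW08}) so that any nonzero combination has at least two nonzero entries $a,b$ on its \emph{last} populated anti-diagonal $k_0$. Because every entry beyond the $k_0$-th anti-diagonal vanishes, the four rows and columns indexed by those two entries yield an upper-left-triangular $4\times 4$ principal submatrix with $a,b$ on its anti-diagonal; it has trace $0$ and determinant $|ab|^2>0$, hence exactly two positive and two negative eigenvalues, and Cauchy interlacing transfers this to the full matrix. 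So the isotropic-block inertia computation does survive, but only on a $4\times 4$ principal submatrix chosen as a function of $r$, not on a fixed global block decomposition.
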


\begin{proof}
  We prove the statement by giving an explicit construction. Our proof
  is motivated by and similar to the diagonal filling technique used
  in Ref.~\cite{CMW08}, but along the other direction of the diagonals.

  We will need the lemma 9 from Ref.~\cite{CMW08} about totally
  non-singular matrix, which we restate as Lemma~\ref{lem:cmw08} in
  the following. For simplicity, we also assume that the totally
  non-singular matrix is real. Therefore, for any length $L \in \N$
  and $L \ge 2$, there is $L-1$ linearly independent real vectors such
  that every nonzero linear combination of them has at least $2$
  nonzero entries.

  Let $H = (H_{jj'}) \in M_d(\C)$ be a matrix. We will always fix the
  diagonal to be zero, namely $H_{jj}=0$ for $0\le j\le d-1$. In the
  upper triangular part of the matrix not including the diagonal,
  there are $2d-3$ lines of entries parallel to the antidiagonal. That
  is, each line contains entries $H_{jj'}$ with $j<j'$ and $j+j'=k$
  where $k$ goes from $1$ to $2d-3$. We will call it the $k$-th line
  of the matrix in the following. We also call the set of entries
  $H_{jj'}$ with $j+j'=k$ the $k$-th antidiagonal. It is easy to see
  that the length $L_k$ of the $k$-th line is
  \begin{equation*}
    L_k =
    \begin{cases}
       \displaystyle \Bigl[ \frac{k+1}{2} \Bigr] &
       \text{for } k\le d-1,\\[.5em]
       \displaystyle \Bigl[ \frac{2d-1-k}{2} \Bigr]
       & \text{otherwise.}
    \end{cases}
  \end{equation*}
  So the length $L_k \ge 2$ for $3\le k \le 2d-5$, and we can find
  $L_k-1$ real vectors for which every nonzero linear combination has
  at least $2$ nonzero entries. For each of the $L_k-1$ vectors, we
  can form two Hermitian matrices. One of them is the symmetric one
  whose $k$-th line is filled with the vector, and the lower
  triangular part determined by the Hermitian condition. Such a matrix
  is a real symmetric matrix having nonzero entries only on the $k$-th
  antidiagonal. We will call it a real $k$-th line matrix. The other
  is the one with $k$-th line filled with the vector multiplied by
  $i=\sqrt{-1}$, and lower part is determined by the Hermitian
  condition. This is a matrix consisting of purely imaginary entries
  on the $k$-th antidiagonal and we call it an imaginary $k$-th line
  matrix.

  Now we prove that the constructed matrices satisfy our requirement.
  First we prove that the matrices are linearly independent. It
  suffices to show that the matrices of nonzero $k$-th line is
  linearly independent. Let $\{v_j\}$ be the set of linearly
  independent real vectors chosen for the $k$-th line. We need to show
  that $\{ (v_j, v_j), (i v_j, -i v_j) \}$ is linearly independent
  over $\C$. If the contrary is true, that is, there exists complex
  numbers $c_j,d_j$ not all zero such that
  \begin{equation*}
    \sum_j c_j (v_j, v_j) + \sum_j d_j (i v_j, -i v_j) = 0.
  \end{equation*}
  This is equivalent to
  \begin{equation*}
    \begin{split}
          \sum_j c_j v_j + i \sum_j d_j v_j & = 0\\
          \sum_j c_j v_j - i \sum_j d_j v_j & = 0.
    \end{split}
  \end{equation*}
  From the above two equations, we get $\sum_j c_j v_j = 0$ and
  $\sum_j d_j v_j =0$ which is a contradiction.

  Next, we prove that for any nonzero real coefficient $r\in \R^m$,
  the matrix $H=\sum r_j H_j$ has at least two positive eigenvalues.
  Let $k_0$ be the largest $k$ such that there is a $k$-th line matrix
  $H_j$ whose coefficient $r_j$ is nonzero. Then, either the real
  $k_0$-th line matrices or the imaginary ones have nonzero
  coefficients. By the construction, this implies that there is at
  least two nonzero entries on the $k_0$th line of the matrix $H$. Let
  the nonzero entries be $a,b \in \C$. We then have a principle
  submatrix of $H$ that has the form
  \begin{equation*}
    \begin{pmatrix}
      0 & x & y & a\\
      \bar{x} & 0 & b & 0\\
      \bar{y} & \bar{b} & 0 & 0\\
      \bar{a} & 0 & 0 & 0
    \end{pmatrix},
  \end{equation*}
  where $x,y$ are two unknown number and $\bar{a}$ represents the
  complex conjugate of $a$. This matrix has trace $0$ and determinant
  $|ab|^2$. Therefore, it has exactly two positive eigenvalues. As it
  is a principle submatrix of matrix $H$, follows from
  Theorem~\ref{thm:horn90}, $H$ has at least two positive eigenvalues.

  The number of matrices thus constructed is the summation
  \begin{equation*}
    m=\sum_{k=3}^{2d-5} 2(L_k-1),
  \end{equation*}
  which can be computed to be
  \begin{equation*}
    d^2-5d+6.
  \end{equation*}
\end{proof}

Discussion: We note that our construction will also imply that the
matrix has at least two negative eigenvalues, thus at least rank $4$.
But our bound is even better than the $(d-3)^2$ bound on the dimension
of subspaces in which every matrix has rank $\ge 4$. This is not a
contradiction as we are considering all real combinations. For
example, the case of $d=4$ has two matrices for our purpose, namely
\begin{equation*}
  H_1 = \begin{pmatrix}
    0 & 0 & 0 & 1\\
    0 & 0 & 1 & 0\\
    0 & 1 & 0 & 0\\
    1 & 0 & 0 & 0
  \end{pmatrix} \text{ and }
  H_2 = \begin{pmatrix}
    0 & 0 & 0 & i\\
    0 & 0 & i & 0\\
    0 & -i & 0 & 0\\
    -i & 0 & 0 & 0
  \end{pmatrix}.
\end{equation*}
These two matrices do satisfy our requirements, but their span
over $\C$ contains a rank $2$ matrix $H_1+i H_2$.

Generalization: Similarly, length $L_k \ge q+1$ for $2q+1\le k \le
2d-2q-3$, and we can find $L_k-q$ real vectors for which every nonzero
linear combination has at least $q+1$ nonzero entries. For each of the
$L_k-q$ vectors, we can also form two Hermitian matrices. Such
constructed matrices are linearly independent and any real linear
combination has at least $q+1$ positive eigenvalues.

We restate our result as Lemma~\ref{lemma:rank-q} which will lead to
Theorem~\ref{th:UDA}.

\begin{lemma}\label{lemma:rank-q}
  There exists a set of $m = d^2-(4q+1)d+(4q^2+2q)$ linearly
  independent Hermitian matrices $H_1, H_2, \ldots, H_m \in M_d(\C)$,
  such that the Hermitian matrix
  \begin{equation*}
    \sum_{j=1}^m r_j H_j
  \end{equation*}
  has at least $q+1$ positive eigenvalues for any nonzero real vector
  $r=(r_j) \in \R^m$.
\end{lemma}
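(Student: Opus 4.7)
The plan is to extend the antidiagonal construction of Lemma~\ref{lemma:2+-} from the ``$\ge 2$ nonzero entries'' regime to the ``$\ge q+1$ nonzero entries'' regime. For each antidiagonal index $k$ with $L_k \ge q+1$, which holds precisely when $2q+1 \le k \le 2d-2q-3$, I invoke the totally non-singular matrix result (Lemma~9 of \cite{CMW08}) to obtain $L_k-q$ real vectors in $\R^{L_k}$ such that every nonzero real combination has at least $q+1$ nonzero coordinates. Each such vector spawns two Hermitian matrices in $M_d(\C)$: a real $k$-line matrix (Hermitian completion of the vector placed on the $k$-th antidiagonal) and an imaginary one (the same vector, multiplied by $i$). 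Summing gives $m = \sum_{k=2q+1}^{2d-2q-3} 2(L_k-q) = d^2-(4q+1)d+(4q^2+2q)$, as claimed. Linear independence transfers verbatim from the $q=1$ proof: matrices on different antidiagonals are automatically independent, and on a single antidiagonal the real/imaginary split reduces to independence of the underlying $v_j$.

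For the eigenvalue count, fix nonzero $r\in \R^m$ and set $H = \sum_j r_j H_j$. Let $k_0$ denote the largest antidiagonal on which $H$ has any nonzero entry; by construction, the $k_0$-th antidiagonal of $H$ carries at least $q+1$ nonzero entries at positions $(j_1,k_0-j_1), \ldots, (j_{q+1},k_0-j_{q+1})$ with $j_1 < \cdots < j_{q+1} < k_0/2$. Consider the $2(q+1)\times 2(q+1)$ principal submatrix $M$ of $H$ on the index set $\{j_1,\ldots,j_{q+1}\} \cup \{k_0-j_{q+1},\ldots,k_0-j_1\}$, written in block form with Hermitian upper-left block $A$ (zero diagonal, otherwise arbitrary), off-diagonal block $B$, and lower-right block $C$. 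By the maximality of $k_0$, every entry of $H$ whose index sum exceeds $k_0$ vanishes. Since any two indices drawn from $\{k_0-j_{q+1},\ldots,k_0-j_1\}$ exceed $k_0/2$ and so have sum exceeding $k_0$, this forces $C=0$; and since the entries of $B$ at positions $(j_a, k_0-j_b)$ with $a>b$ likewise have index sum exceeding $k_0$, $B$ is anti-triangular with its antidiagonal occupied by the $q+1$ chosen nonzero entries. In particular $\det B \ne 0$.

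The decisive step, and the main obstacle, is to show that this $M = \bigl(\begin{smallmatrix} A & B \\ B^* & 0 \end{smallmatrix}\bigr)$ with $B$ invertible always has inertia exactly $(q+1,q+1,0)$, generalizing the trace-plus-determinant calculation used for $q=1$. First, $M$ is invertible: annihilating a block vector $(x,y)$ forces $B^*x=0$ so $x=0$, and then $By=0$ so $y=0$. Second, when $A$ is invertible, conjugating by $P = \bigl(\begin{smallmatrix} I & -A^{-1}B \\ 0 & I \end{smallmatrix}\bigr)$ block-diagonalizes $M$ to $\mathrm{diag}(A,\,-B^*A^{-1}B)$; since $-B^*A^{-1}B$ is congruent (via $B$) to $-A^{-1}$, whose positive and negative counts are those of $A$ swapped, the two inertias add to $(q+1,q+1,0)$ regardless of the signature of $A$. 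If $A$ is singular I perturb $A\to A+\varepsilon I$, invertible for small generic $\varepsilon$; the perturbed $M_\varepsilon$ has inertia $(q+1,q+1,0)$, and since $M$ is already invertible, continuity of eigenvalues forces the same inertia for $M$. Finally, Cauchy's interlacing theorem (Theorem~\ref{thm:horn90}) transfers at least $q+1$ positive eigenvalues from $M$ to $H$, completing the proof. The decisive structural feature that makes the inertia argument succeed is $C=0$, which both secures invertibility of $M$ and makes the Schur complement $-B^*A^{-1}B$ cancel the signature of $A$ exactly.
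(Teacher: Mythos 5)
Your proposal is correct and uses exactly the paper's construction (the same antidiagonal filling over $2q+1\le k\le 2d-2q-3$, the same count $m=\sum_k 2(L_k-q)$, the same final appeal to interlacing via Theorem~\ref{thm:horn90}), but the decisive step --- showing the distinguished $2(q+1)\times 2(q+1)$ principal submatrix has exactly $q+1$ positive eigenvalues --- is handled by a genuinely different argument. The paper proceeds by induction on $q$: deleting the first and last rows/columns of the anti-triangular matrix yields a smaller matrix of the same type, the induction hypothesis together with interlacing forces at least $r+1$ eigenvalues of each sign, and the determinant sign $(-1)^{q+1}|ab\cdots|^2$ settles the remaining two. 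You instead read off the block structure $M=\bigl(\begin{smallmatrix} A & B\\ B^{*} & 0\end{smallmatrix}\bigr)$ with $B$ anti-triangular and invertible, and compute the inertia in one stroke via the Schur complement $-B^{*}A^{-1}B$ (with a perturbation of $A$ to handle the singular case). Your route buys two things: it is non-inductive, and it makes no use whatsoever of the diagonal of $A$, whereas the paper's induction as literally stated (for ``traceless'' anti-triangular matrices) quietly needs the stronger zero-diagonal property of the construction --- deleting the first row and column of a merely traceless matrix need not leave a traceless one. The paper's version, by contrast, makes the determinant identity $\det M=(-1)^{q+1}|ab\cdots|^2$ explicit, which your congruence argument delivers only implicitly through $\det B\ne 0$. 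Both are valid proofs of the lemma.
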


We just follow the lines of the proof of Lemma~\ref{lemma:2+-}. To
complete our argument, we need to show that any $2(q+1)$ by $2(q+1)$
invertible traceless, Hermitian, upper left triangler matrix has
exactly $q+1$ positive eigenvalues.

Let's prove this claim by induction. When $q=1$, it is already known.
Let's assume this claim holds true for any $q\leq r$. Then for
$q=r+1$, we can write such matrix $A$ in the following form
\begin{equation*}
  \begin{pmatrix}
    0 & x_1 & x_2 & \cdots & x_{2r+1} & x_{2r+2} &a\\
    \bar{x_1} &0 & y_1 &\cdots  &y_{2r}& b & 0\\
    \bar{x_2} & \bar{y_1} & 0 &\cdots& c & 0 & 0\\
    \vdots & \vdots & \vdots & \vdots & \vdots & \vdots\\
    \bar{a} & 0 & 0 & 0& 0 & 0 & 0
  \end{pmatrix}.
\end{equation*}

One may observe that, by deleting the first and the last rows/columns,
we will have a $2r$ by $2r$ invertible, traceless, Hermitian, upper
left triangler submatrix.

From our assumption, this submatrix has exactly $r$ positive
eigenvalues which means $A$ has at least $r$ positive eigenvalues and
at least $r$ negative eigenvalues.

Note that its determinant equals to $(-1)^{q+1} |ab\cdots|^2$. This
follows that $A$ has exactly $r+2$ positive eigenvalues which
completes our argument.

The number of matrices thus constructed is the summation
\begin{eqnarray*}
  m&=&2\sum_{k=2q+1}^{2d-2q-3}(L_k-q)\\
  &=&2\sum\limits_{k=2q+1}^{d-1}(\Bigl[ \frac{k+1}{2}
  \Bigr]-q)+2\sum\limits_{k=d}^{2d-2q-3}(\Bigl[ \frac{2d-1-k}{2}
  \Bigr]-q)\\
  &=& d^2-(4q+1)d+(4q^2+2q).
\end{eqnarray*}

\begin{lemma}[Lemma~9, \cite{CMW08}]
  \label{lem:cmw08}
  Let $M$ be a $d$ by $d$ totally non-singular matrix, with $d\ge
  n$. Let $v$ be any linear combination of $n$ of the columns of $M$.
  Then $v$ contains at most $n-1$ zero elements.
\end{lemma}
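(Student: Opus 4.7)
The plan is to argue by contradiction, exploiting the defining property of a totally non-singular matrix, namely that \emph{every} square submatrix of $M$ has nonzero determinant. Write $v = \sum_{j=1}^n c_j M_{\cdot,k_j}$ as a linear combination of the columns of $M$ indexed by $k_1<k_2<\cdots<k_n$, and assume the combination is nontrivial, i.e.\ not all $c_j$ vanish (the case of the trivial combination just gives $v=0$ and is not what the lemma is about). The goal is to show $v$ has at most $n-1$ zero entries.

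Suppose instead that $v$ has at least $n$ zero entries, located in rows $i_1<i_2<\cdots<i_n$. Restricting the equation $v=\sum_j c_j M_{\cdot,k_j}$ to these rows yields the homogeneous system
\begin{equation*}
\sum_{j=1}^n c_j M_{i_\ell,k_j}=0, \qquad \ell=1,\ldots,n,
\end{equation*}
which is exactly $M' c = 0$, where $M'$ is the $n\times n$ submatrix of $M$ picked out by rows $\{i_\ell\}$ and columns $\{k_j\}$ and $c=(c_1,\ldots,c_n)^T$.

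By the totally non-singular hypothesis, $\det M'\neq 0$, so $M'$ is invertible and the only solution is $c=0$, contradicting the choice of a nontrivial combination. Hence $v$ has strictly fewer than $n$ zero entries.

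The essential (and only non-routine) step is recognizing that the vanishing pattern of $v$ pulls back to a homogeneous linear system on the coefficient vector $c$ whose matrix is a square submatrix of $M$; once this observation is made, total non-singularity closes the argument immediately, so I do not expect any real obstacle in the proof.
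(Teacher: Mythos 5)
Your proof is correct. Note that the paper itself does not prove this statement: it is quoted verbatim as Lemma~9 of the cited reference \cite{CMW08} and used as a black box, so there is no in-paper argument to compare against. Your argument --- pulling the zero pattern of $v$ back to a homogeneous system $M'c=0$ on an $n\times n$ submatrix $M'$ and invoking total non-singularity --- is the standard and essentially only natural proof, and it is the one given in the original reference. You are also right to flag the implicit nontriviality assumption on the coefficient vector $c$; the lemma is false for the zero combination, and the paper's application (nonzero linear combinations of the chosen row vectors) respects this.
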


\begin{theorem}[Theorem~4.3.15, \cite{Horn90}]
\label{thm:horn90}
Let $A$ be a $n$ by $n$ Hermitian matrix, let $r$ be an integer with
$1\leq r \leq n$, and let $A_r$ denote any $r$ by $r$ principle
submatrix of $A$ (obtained by deleting $n-r$ rows and the
corresponding columns from $A$). For each integer $k$ such that $1\leq
k\leq r$ we have
\begin{eqnarray*}
\lambda_k^{\uparrow}(A)\leq \lambda_k^{\uparrow}(A_r)\leq \lambda_{k+n-r}^{\uparrow}(A).
\end{eqnarray*}
\end{theorem}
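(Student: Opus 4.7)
The plan is to deduce both inequalities from the Courant--Fischer min-max characterization of Hermitian eigenvalues. I first reduce to the case where $A_r$ is the leading principal submatrix (obtained by deleting the last $n-r$ rows and columns): since simultaneously permuting the same set of rows and columns of $A$ by a permutation matrix $P$ gives a matrix $PAP^{T}$ that has the same spectrum as $A$ and the same principal submatrix (up to a further permutation that also preserves the spectrum of $A_r$), we may assume the surviving indices are $\{1,\ldots,r\}$. I then identify $\C^r$ with the subspace $E=\mathrm{span}\{e_1,\ldots,e_r\}\subseteq\C^n$, so that $\langle x,A_r x\rangle=\langle x,Ax\rangle$ for every $x\in E$.

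Recall the two equivalent forms of Courant--Fischer for an $n\times n$ Hermitian $M$ with eigenvalues in ascending order:
\begin{equation*}
\lambda_k^{\uparrow}(M)=\min_{\dim V=k}\,\max_{\substack{x\in V\\ \|x\|=1}}\langle x,Mx\rangle=\max_{\dim W=n-k+1}\,\min_{\substack{x\in W\\ \|x\|=1}}\langle x,Mx\rangle.
\end{equation*}

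For the lower bound $\lambda_k^{\uparrow}(A)\le\lambda_k^{\uparrow}(A_r)$, I would take a $k$-dimensional subspace $V\subseteq E$ that realizes the outer minimum in the min-max formula for $A_r$. Viewed inside $\C^n$, $V$ is still a $k$-dimensional test subspace for $\lambda_k^{\uparrow}(A)$, and because the quadratic forms of $A$ and $A_r$ coincide on $E$, the min-max formula for $A$ yields the inequality at once. For the upper bound $\lambda_k^{\uparrow}(A_r)\le\lambda_{k+n-r}^{\uparrow}(A)$, I switch to the max-min formula and choose an $(r-k+1)$-dimensional subspace $W\subseteq E$ attaining the outer maximum for $A_r$; the key arithmetic identity is $r-k+1=n-(k+n-r)+1$, so $W$ is an admissible test subspace in the max-min formula for $\lambda_{k+n-r}^{\uparrow}(A)$, and the same coincidence of quadratic forms on $E$ closes the argument.

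The main obstacle is purely bookkeeping: choosing the correct form of Courant--Fischer for each half, matching subspace dimensions ($k$ for the lower bound, $r-k+1$ for the upper bound), and verifying that the index shift on the right-hand side is exactly $k+n-r$ rather than some off-by-one variant. No deeper matrix analysis is needed: once Courant--Fischer is invoked, both inequalities reduce to the trivial observation that every subspace of $E$ is \emph{a fortiori} a subspace of $\C^n$ on which the forms of $A$ and $A_r$ agree.
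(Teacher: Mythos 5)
Your proof is correct: the reduction by simultaneous permutation, the use of the min--max form for the left inequality and the max--min form for the right one, and the dimension count $r-k+1=n-(k+n-r)+1$ are all handled properly. The paper itself offers no proof of this statement --- it is quoted as Theorem~4.3.15 of Horn and Johnson and used as a black box --- and your Courant--Fischer argument is precisely the standard textbook proof of that inclusion principle, so there is no substantive difference to report.
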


\section*{Appendix B: Proof of Theorem~\ref{th:RDM2}}
\begin{theorem}
  Almost every tripartite density operator $\rho \in
  \mathcal{B}(\mathcal{H}_{d_1}\otimes \mathcal{H}_{d_2}\otimes
  \mathcal{H}_{d_3})$ with rank no more than $\lfloor
  \frac{d_{1}}{d_{3}}\rfloor$ is UDA by its 2-RDMs of particle pairs
  $\{1,2\}$ and $\{1,3\}$.
\end{theorem}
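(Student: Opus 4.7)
The plan is to follow the argument of Theorem~\ref{th:RDM1}, with the single pure state replaced by a rank decomposition of $\rho$. Fix a rank decomposition $\rho=\sum_{\alpha=1}^{r}p_\alpha\oprod{\phi^{(\alpha)}}{\phi^{(\alpha)}}$ with $r\le\lfloor d_1/d_3\rfloor$, write $\ket{\phi^{(\alpha)}}=\sum_{ijk}c^{(\alpha)}_{ijk}\ket{ijk}$, and set $\ket{v^{(\alpha)}_l}=\sum_{ij}c^{(\alpha)}_{ijl}\ket{ij}$. The 2-RDM of $\rho$ on $\{1,2\}$ is $\sigma_{12}=\sum_{\alpha,l}p_\alpha\oprod{v^{(\alpha)}_l}{v^{(\alpha)}_l}$, of rank at most $rd_3\le d_1$; for generic $\rho$ the vectors $\{\ket{v^{(\alpha)}_l}\}_{\alpha,l}$ are linearly independent and span $\supp\sigma_{12}$. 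If $\tau$ is any state with the same 2-RDMs as $\rho$ on $\{1,2\}$ and $\{1,3\}$, then $\supp\tau\subseteq\supp\sigma_{12}\otimes\H_{d_3}$, so one can purify $\tau$ to $\ket{\Psi}\in\H_{d_1}\otimes\H_{d_2}\otimes\H_{d_3}\otimes\H_{d_4}$ in the form
\begin{equation*}
\ket{\Psi}=\sum_{\alpha,l}\sqrt{p_\alpha}\,\ket{v^{(\alpha)}_l}_{12}\otimes\ket{E_{\alpha l}}_{34},\qquad \ket{E_{\alpha l}}=\sum_{k}\ket{k}_3\ket{e_{\alpha l,k}}_4,
\end{equation*}
where the orthonormality $\iprod{E_{\alpha'l'}}{E_{\alpha l}}=\delta_{\alpha\alpha'}\delta_{ll'}$ is forced by the matching of the 2-RDM on $\{1,2\}$.

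Next I would impose the matching of the 2-RDM on $\{1,3\}$ matrix element by matrix element. Introducing the unknowns $x_{(\alpha l)k,(\alpha' l')k'}=\iprod{e_{\alpha' l',k'}}{e_{\alpha l,k}}$, this yields, for each $(i,i',k,k')\in[d_1]^2\times[d_3]^2$, the linear equation
\begin{equation*}
\sum_{\alpha l,\alpha' l'}\sqrt{p_\alpha p_{\alpha'}}\Bigl(\sum_j c^{(\alpha)}_{ijl}\bar c^{(\alpha')}_{i'jl'}\Bigr)\,x_{(\alpha l)k,(\alpha' l')k'}=\sum_\alpha p_\alpha\sum_j c^{(\alpha)}_{ijk}\bar c^{(\alpha)}_{i'jk'}.
\end{equation*}
A direct check shows that the canonical purification $\sum_\alpha\sqrt{p_\alpha}\ket{\phi^{(\alpha)}}\ket{\alpha}_4$, which corresponds to $\tau=\rho$, produces the solution $x_{(\alpha l)k,(\alpha' l')k'}=\delta_{\alpha\alpha'}\delta_{lk}\delta_{l'k'}$. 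If this is the only solution, then the full Gram matrix of $\{\ket{e_{\alpha l,k}}\}$ is fixed, which determines the $\ket{e_{\alpha l,k}}$ up to an isometry on $\H_{d_4}$; since such an isometry does not affect the 3-RDM of $\ket{\Psi}$ on $\{1,2,3\}$, this forces $\tau=\rho$.

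The main technical step is therefore to show that the above linear system has a unique solution for generic $\rho$. The system decouples over $(k,k')$: for each fixed $(k,k')$ it reduces to a system in $r^2d_3^2$ unknowns with $d_1^2$ equations, whose coefficient matrix $M_{(i,i'),(\alpha l,\alpha' l')}=\sum_j c^{(\alpha)}_{ijl}\bar c^{(\alpha')}_{i'jl'}$ depends on $\rho$ alone. The hypothesis $r\le\lfloor d_1/d_3\rfloor$ gives $d_1^2\ge r^2d_3^2$, so there are enough equations, and the remaining task is to establish that $M$ has full column rank generically. My plan here is to mimic the determinant argument used in the proof of Theorem~\ref{th:RDM1}: pick an injection $\iota\colon[r]\times[d_3]\hookrightarrow[d_1]$ (which exists precisely because $rd_3\le d_1$), form the $r^2d_3^2\times r^2d_3^2$ submatrix of $M$ obtained by restricting to rows $(i,i')=(\iota(\alpha,l),\iota(\alpha',l'))$, and identify in its Leibniz expansion a single monomial---the natural generalization of $\prod_i c_{iii}\bar c_{iii}$, of the form $\prod_{\alpha,l}\,|c^{(\alpha)}_{\iota(\alpha,l),l,l}|^2$---which arises from only one permutation thanks to the injectivity of $\iota$ and the algebraic independence of the variables $c^{(\alpha)}$ across different $\alpha$. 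I expect this combinatorial bookkeeping, rather than the linear-algebra part, to be the main obstacle, in close parallel to the pure-state proof of Theorem~\ref{th:RDM1}.
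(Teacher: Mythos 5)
Your proposal follows essentially the same route as the paper's Appendix~B proof: purify the competing state over the support of the $\{1,2\}$ marginal, expand it in the linearly independent vectors $\ket{v}$, turn the $\{1,3\}$ marginal condition into a linear system in the Gram-matrix unknowns $x$, and conclude by a generic rank/counting argument that hinges on $rd_3\le d_1$ (the paper counts $d_1^2d_3^2+d_1^2d_2^2$ equations against $d_3^4d_4^2$ variables rather than decoupling over $(k,k')$ and extracting orthonormality of the $\ket{E_{\alpha l}}$ first, but the substance is the same). Your handling of the generic full-column-rank step via a single-monomial determinant argument is, if anything, more explicit than the paper's, which simply asserts that the equations are generically linearly independent.
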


\begin{proof}
  For any $\rho_{123} \in \mathcal{B}(\mathcal{H}_{d_1}\otimes
  \mathcal{H}_{d_2}\otimes \mathcal{H}_{d_3})$, we can choose
  $\ket{\phi}_{1234}$ to be the pure state whose 3-RDM of particles
  $\{1,2,3\}$ is exactly $\rho_{123}$. We can further assume
  $d_{4}\leq \rank \rho$.

Without loss of generality, we can assume
\begin{eqnarray*}
\ket{\phi}_{1234} = \sum\limits_{i_{1},i_{2},i_{3},i_{4}}
\lambda_{i_{1}i_{2}i_{3}i_{4}} \ket{i_{1}}_1\ket{i_{2}}_2 \ket{i_{3}}_3\ket{i_{4}}_4.
\end{eqnarray*}

If there is another $\sigma_{123}$ agrees with $\rho_{123}$ in
subsystems $\{1,2\}$ and $\{1,3\}$. Then we can find some pure state
$\ket{\psi}_{12345}$ whose 3-RDM of particle set $\{1,2,3\}$ is
$\sigma_{123}$.

In general, $\ket{v_{i_{3}i_{4}}}_{12} = \sum\limits_{i_{1}i_{2}}
\lambda_{i_{1}i_{2}i_{3}i_{4}} \ket{i_{1}}_1\ket{i_{2}}_2$ are
linearly independent and they will span the support of $\rho_{12}$.

Hence, any pure state $\ket{\psi}_{12345}$ which agrees with
$\ket{\phi}_{1234}$ in subsystem $\{1,2\}$ can be expanded as the
following.
\begin{eqnarray*}
&& \ket{ \psi}_{12345}\\
&=& \sum \limits_{i_{3},i_{4}} \ket{v_{i_{3}i_{4}}}_{12}
\ket{E_{i_{3}i_{4}}}_{345}\\
&=& \sum \limits_{i_{3},i_{4}} \sum \limits_{i_{1},i_{2}}
\lambda_{i_{1}i_{2}i_{3}i_{4}} \ket{i_{1}}_1 \ket{i_{2}}_2 \sum
\limits_{i_{3}^{ \prime},i_{4}^{ \prime}}  \ket{i_{3}^{ \prime}}_3
\ket{i_{4}^{ \prime}}_4 \ket{e_{i_{3}i_{4},i_{3}^{ \prime}i_{4}^{
      \prime}}}_5\\
&=& \sum \limits_{i_{1},i_{2},i_{3},i_{4},j_{3},j_{4}}
\lambda_{i_{1}i_{2}j_{3}j_{4}} \ket{i_{1}}_1 \ket{i_{2}}_2
\ket{j_{3}}_3 \ket{j_{4}}_4 \ket{e_{i_{3}i_{4},j_{3}j_{4}}}_5.
\end{eqnarray*}

Recall that $\ket{\psi}_{12345}$ and $\ket{\phi}_{1234}$ agree in
subsystem $\{1,3\}$, we will have
\begin{eqnarray*}
\tr_{\{2,4,5\}}(\ket{\psi}\bra{\psi}_{12345}) = \tr_{\{2,4\}}(
\ket{\phi} \bra{\phi}_{1234}).
\end{eqnarray*}

Firstly, let's look into the left hand side.
\begin{eqnarray*}
  &&\tr_{\{2,4,5\}}(\ket{\psi}\bra{\psi}_{12345})\\
  &=&\tr_{\{2, 4,
    5\}}(\sum\limits_{{i_{1},i_{2},i_{3},i_{4},j_{3},j_{4},}\atop{i_{1}^{\prime},i_{2}^{\prime},i_{3}^{\prime},i_{4}^{\prime},j_{3}^{\prime},j_{4}^{\prime}}}
  \lambda_{i_{1}i_{2}i_{3}i_{4}}\lambda_{i_{1}^{\prime}i_{2}^{\prime}i_{3}^{\prime}i_{4}^{\prime}}^{\ast}
  \cdot \ket{i_{1}}\bra{i_{1}^{\prime}}_1\\
  &&\otimes \ket{i_{2}}\bra{i_{2}^{\prime}}_2 \otimes
  \ket{j_{3}}\bra{j_{3}^{\prime}}_3\otimes\ket{j_{4}}\bra{j_{4}^{\prime}}_4\otimes
  \ket{e_{i_{3}i_{4},j_{3}j_{4}}}\bra{e_{i_{3}^{\prime}i_{4}^{\prime},j_{3}^{\prime}j_{4}^{\prime}}}_5)\\
  &=&\sum\limits_{{i_{1},i_{2},i_{3},i_{4},j_{3},}\atop{j_{4},i_{1}^{\prime},i_{3}^{\prime},i_{4}^{\prime},j_{3}^{\prime}}}\lambda_{i_{1}i_{2}i_{3}i_{4}}\lambda_{i_{1}^{\prime}i_{2}^{\prime}i_{3}^{\prime}i_{4}^{\prime}}^{\ast}\langle
  e_{i_{3}^{\prime}i_{4}^{\prime},j_{3}^{\prime}j_{4}}
  \ket{e_{i_{3}i_{4},j_{3}j_{4}}} \\
  &&\cdot\ket{i_{1}}\bra{i_{1}^{\prime}}_1\otimes\ket{j_{3}}\bra{j_{3}^{\prime}}_3\\
  &=&\sum\limits_{i_{1},j_{3},i_{1}^{\prime},j_{3}^{\prime}} (\sum\limits_{i_{2},i_{3},i_{4},j_{4},i_{3}^{\prime},i_{4}^{\prime}}\lambda_{i_{1}i_{2}i_{3}i_{4}} \lambda_{i_{1}^{\prime}i_{2}^{\prime}i_{3}^{\prime}i_{4}^{\prime}}^{\ast}\cdot \\
  &&\langle e_{i_{3}^{\prime}i_{4}^{\prime},j_{3}^{\prime}j_{4}} \ket{e_{i_{3}i_{4},j_{3}j_{4}}}) \ket{i_{1}}\bra{i_{1}^{\prime}}_1\otimes\ket{j_{3}}\bra{j_{3}^{\prime}}_3.
\end{eqnarray*}

Then, for the right hand side,
\begin{eqnarray*}
  &&\tr_{\{2,4\}}(\ket{\phi}\bra{\phi}_{1,2,3,4})\\
  &=& \sum\limits_{i_{1},i_{2},i_{3},i_{4},i_{1}^{\prime}, i_{3}^{\prime}} \lambda_{i_{1}i_{2}i_{3}i_{4}}\lambda_{i_{1}^{\prime}i_{2}i_{3}^{\prime}i_{4}}^{\ast} \ket{i_{1}}\bra{i_{1}^{\prime}}_1\otimes \ket{i_{3}}\bra{i_{3}^{\prime}}_3\\
  &=& \sum\limits_{i_{1},j_{3},i_{1}^{\prime}, j_{3}^{\prime}} \sum\limits_{i_{2},j_{4}}\lambda_{i_{1}i_{2}j_{3}j_{4}}\lambda_{i_{1}^{\prime}i_{2}j_{3}^{\prime}j_{4}}^{\ast} \ket{i_{1}}\bra{i_{1}^{\prime}}_1\otimes \ket{j_{3}}\bra{j_{3}^{\prime}}_3.
\end{eqnarray*}

Combining the above two equations, we have
\begin{eqnarray*}
&&\sum\limits_{{i_{2},i_{3},i_{4},j_{4},}\atop{i_{3}^{\prime},i_{4}^{\prime}}}\lambda_{i_{1}i_{2}i_{3}i_{4}} \lambda_{i_{1}^{\prime}i_{2}^{\prime}i_{3}^{\prime}i_{4}^{\prime}}^{\ast}\langle e_{i_{3}^{\prime}i_{4}^{\prime},j_{3}^{\prime}j_{4}} \ket{e_{i_{3}i_{4},j_{3}j_{4}}}\\
&= &\sum\limits_{i_{2},j_{4}}\lambda_{i_{1}i_{2}j_{3}j_{4}}\lambda_{i_{1}^{\prime}i_{2}j_{3}^{\prime}j_{4}}^{\ast}
\end{eqnarray*}
for any $i_{1},i_{1}^{\prime}, j_{3},j_{3}^{\prime}$.

Similarly, follows from the fact that $\ket{\psi}_{12345}$ and
$\ket{\phi}_{1234}$ agree in subsystem $\{1,2\}$, we have the
following equation.
\begin{eqnarray*}
&&\sum\limits_{{i_{3},i_{4},j_{3},}\atop{j_{4},i_{3}^{\prime},i_{4}^{\prime}}}\lambda_{i_{1}i_{2}i_{3}i_{4}} \lambda_{i_{1}^{\prime}i_{2}^{\prime}i_{3}^{\prime}i_{4}^{\prime}}^{\ast}\langle e_{i_{3}^{\prime}i_{4}^{\prime},j_{3}j_{4}} \ket{e_{i_{3}i_{4},j_{3}j_{4}}}\\
&=&\sum\limits_{i_{3},i_{4}}\lambda_{i_{1}i_{2}i_{3}i_{4}}\lambda_{i_{1}^{\prime}i_{2}^{\prime}i_{3}i_{4}}^{\ast}
\end{eqnarray*}
for any $i_{1},i_{2},i_{1}^{\prime}, i_{2}^{\prime}$.

Let's denote $x_{p_3,p_4,q_3,q_4,p_{3}^{\prime}, p_{4}^{\prime},q_{3}^{\prime}}=\langle e_{p_{3}p_{4},q_{3}q_{4}}\ket{e_{p_{3}^{\prime}p_{4}^{\prime},q_{3}^{\prime}q_{4}}}$ for any $p_3,p_4,q_3,q_4,p_{3}^{\prime}, p_{4}^{\prime},q_{3}^{\prime}$.

If there is only one solution $\{x_{p_3,p_4,q_3,q_4,p_{3}^{\prime},
  p_{4}^{\prime},q_{3}^{\prime}}\}$ satisfies the above two linear
systems, then
\begin{eqnarray*}
&&\sigma_{123}\\
&=&\tr_{\{4,5\}}\ket{\psi}\bra{\psi}_{12345}\\
&=&\tr_{\{4,5\}}(\sum\limits_{{i_{1},i_{2},i_{3},i_{4},j_{3},j_{4},}\atop{i_{1}^{\prime},i_{2}^{\prime},i_{3}^{\prime},i_{4}^{\prime},j_{3}^{\prime},j_{4}^{\prime}}} \lambda_{i_{1}i_{2}i_{3}i_{4}}\lambda_{i_{1}^{\prime}i_{2}^{\prime}i_{3}^{\prime}i_{4}^{\prime}}^{\ast} \cdot\ket{i_{1}}\bra{i_{1}^{\prime}}_1\\
&&\otimes \ket{i_{2}}\bra{i_{2}^{\prime}}_2\otimes \ket{j_{3}}\bra{j_{3}^{\prime}}_3\otimes \ket{j_{4}}\bra{j_{4}^{\prime}}_4\otimes \ket{e_{i_{3}i_{4},j_{3}j_{4}}}\bra{e_{i_{3}^{\prime}i_{4}^{\prime},j_{3}^{\prime}j_{4}^{\prime}}})\\
&=&\sum\limits_{{i_{1},i_{2},i_{3},i_{4},j_{3},j_{4},}\atop{i_{1}^{\prime},i_{2}^{\prime},i_{3}^{\prime},i_{4}^{\prime},j_{3}^{\prime}}} \lambda_{i_{1}i_{2}i_{3}i_{4}}\lambda_{i_{1}^{\prime}i_{2}^{\prime}i_{3}^{\prime}i_{4}^{\prime}}^{\ast} \langle e_{i_{3}^{\prime}i_{4}^{\prime},j_{3}^{\prime}j_{4}}  \ket{e_{i_{3}i_{4},j_{3}j_{4}}}\\
&&\cdot \ket{i_{1}}\bra{i_{1}^{\prime}}_1\otimes  \ket{i_{2}}\bra{i_{2}^{\prime}}_2\otimes \ket{j_{3}}\bra{j_{3}^{\prime}}_3
\end{eqnarray*}
is completely determined.

There are generically $d_1^2 d_3^2+d_1^2d_2^2$ linearly independent
linear equations and $d_3^4d_4^2$ variables. $d_3^4d_4^2\leq d_3^4
(\rank{\rho})^2\leq d_3^4(\frac{d_1}{d_3})^2<d_1^2d_3^2+d_1^2d_2^2$
implies that there is at most one solution. Thus a generic low-rank
density operator $\rho_{123}$ is UDA by its 2-RDMs of particle pairs
$\{1,2\}$ and $\{1,3\}$.
\end{proof}

\section*{Appendix C: Proof of Theorem~\ref{th:Num}}

We begin by presenting without proof a result that was proved as
Theorem~1(i) in~\cite{Embry:1970up}. Before we state the result,
recall from Sec.~IV that $W_2(A)$ is the numerical range of $A_1 +
iA_2$ and is thus convex by the Hausdorff--Toeplitz theorem
\cite{Toeplitz:1918fv,Hausdorff:1919ip}. It therefore makes sense to
talk about extreme points of $W_2(A)$ in this case.
\begin{proposition}\label{prop:embry}
	Let $\vec{x} \in W_2(A)$. Then $\vec{x}$ is an extreme point of $W_2(A)$ if and only if
	\begin{align*}
		M_x := \{ \lambda\ket{\psi} : \lambda \in \mathbb{C}, {\mathbf A}(\ket{\psi}) = \vec{x}\}
	\end{align*}
	is a linear subspace of $\mathcal{H}_d$.
\end{proposition}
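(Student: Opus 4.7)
The plan is to prove the two implications separately: $(\Rightarrow)$ via a parallelogram identity plus extremality, and $(\Leftarrow)$ by reducing to the structure of a two-dimensional joint numerical range.

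For the forward direction, assume $\vec{x}$ is extreme. Closure of $M_{x}$ under scalar multiplication is immediate from the definition, so only closure under addition requires attention. Given $\ket{v_{1}},\ket{v_{2}}\in M_{x}$, set $\ket{w}=\ket{v_{1}}+\ket{v_{2}}$ and $\ket{w'}=\ket{v_{1}}-\ket{v_{2}}$. Since $\ket{v_{j}}\in M_{x}$ implies $\langle v_{j}|A_{i}|v_{j}\rangle=x_{i}\|v_{j}\|^{2}$, a direct expansion of inner products gives
\begin{align*}
\|w\|^{2}+\|w'\|^{2} &= 2(\|v_{1}\|^{2}+\|v_{2}\|^{2}),\\
\langle w|A_{i}|w\rangle+\langle w'|A_{i}|w'\rangle &= x_{i}\bigl(\|w\|^{2}+\|w'\|^{2}\bigr),
\end{align*}
for $i=1,2$. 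The trivial cases $\ket{w}=0$ or $\ket{w'}=0$ place $\ket{w}\in M_{x}$ directly; otherwise, dividing exhibits $\vec{x}$ as the convex combination $s\,\mathbf{A}(\ket{w}/\|w\|)+(1-s)\,\mathbf{A}(\ket{w'}/\|w'\|)$ of two points of $W_{2}(\mathbf{A})$, with $s=\|w\|^{2}/(\|w\|^{2}+\|w'\|^{2})\in(0,1)$. Extremality of $\vec{x}$ forces both points to equal $\vec{x}$, so $\ket{w}\in M_{x}$.

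For the reverse direction, assume $M_{x}$ is a linear subspace and suppose for contradiction that $\vec{x}=s\vec{y}+(1-s)\vec{z}$ with $s\in(0,1)$ and distinct $\vec{y},\vec{z}\in W_{2}(\mathbf{A})$. Pick unit realizers $\ket{\phi_{y}},\ket{\phi_{z}}$; they must be linearly independent (else $\vec{y}=\vec{z}$), so $V:=\mathrm{span}(\ket{\phi_{y}},\ket{\phi_{z}})$ is two-dimensional. By the Hausdorff--Toeplitz theorem applied to $A_{1}|_{V}+iA_{2}|_{V}$, the restricted range $W_{2}(\mathbf{A}|_{V})$ is a (possibly degenerate) elliptical disk containing both $\vec{y}$ and $\vec{z}$, and the open segment joining them lies in its relative interior, so $\vec{x}$ is a (relative) interior point. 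A standard preimage count then shows that $\vec{x}$ is attained by at least two linearly independent unit vectors of $V$: in the non-degenerate case by the two hemispheres of the Bloch sphere of $V$ projecting onto the same interior point of the elliptical disk, and in the degenerate segment case by the residual relative phase between the two joint eigenvectors of $A_{1}|_{V},A_{2}|_{V}$ at the uniquely determined eigenvalue weights. Both lie in $M_{x}\cap V$, so $M_{x}\cap V=V$, which forces $\ket{\phi_{y}}\in M_{x}$ and hence $\vec{y}=\vec{x}$, contradicting $\vec{y}\neq\vec{z}$.

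The main obstacle is the degenerate sub-case of $(\Leftarrow)$, where $W_{2}(\mathbf{A}|_{V})$ collapses to a line segment: one must verify that $\vec{x}$ cannot coincide with an endpoint of the segment (ruled out because endpoints are extreme in the segment and would force $\vec{y}=\vec{z}=\vec{x}$), and then check that the two phase-shifted preimages at an interior segment point really are linearly independent in $V$. Once this case distinction is carried out, the argument reduces cleanly to the Hausdorff--Toeplitz theorem together with the parallelogram computation above.
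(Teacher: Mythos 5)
Your proof is correct, but note that the paper does not actually prove this proposition: it is imported verbatim as Theorem~1(i) of Embry's paper and used as a black box, so there is no in-paper argument to compare against. Your forward direction (the parallelogram identity for the quadratic forms $\bra{v}A_i\ket{v}$ applied to $\ket{v_1}\pm\ket{v_2}$, followed by extremality) is clean and, as the paper itself observes in Appendix~C, works for arbitrary $m$, not just $m=2$. Your reverse direction correctly isolates where $m=2$ is essential --- the elliptical range theorem for the compression of $A_1+iA_2$ to the two-dimensional subspace $V=\mathrm{span}(\ket{\phi_y},\ket{\phi_z})$ --- and the preimage argument does produce two linearly independent elements of $M_x\cap V$, forcing $M_x\supseteq V$ and hence $\vec{y}=\vec{z}=\vec{x}$, the desired contradiction. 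The one place you should be more explicit is the phrase ``a standard preimage count'': to see that the fibre of the rank-two affine map over an interior point of the elliptical disk meets the open Bloch ball of $V$ (so that it cuts the Bloch sphere in two distinct points, i.e.\ two non-proportional unit vectors, rather than being tangent), use that a surjective affine map carries the interior of the ball onto the relative interior of the image; the degenerate normal case you handle separately via the two relative phases of the joint eigenvectors, and there the two realizers $\sqrt{t}\ket{u_1}\pm\sqrt{1-t}\ket{u_2}$ are visibly independent because $t\in(0,1)$. With that observed, the argument is complete and constitutes a correct self-contained proof of the cited result.
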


We are now in a position to prove Theorem~\ref{th:Num}.
\begin{proof}[Proof of Theorem~\ref{th:Num}]
  Suppose that $\ket{\psi}$ is UDP and define $\vec{x} :=
  (\bra{\psi}A_1\ket{\psi},\bra{\psi}A_2\ket{\psi})$. Then $M_x =
  \{\lambda\ket{\psi} : \lambda \in \mathbb{C}\}$, which is linear, so
  $\vec{x}$ is an extreme point of $W_2(A)$ by
  Proposition~\ref{prop:embry}. Because $W_2(A)=C_2(A)$ in this case
  by convexity, $\vec{x}$ is also an extreme point of $C_2(A)$.
  Suppose now that there exists a mixed state $\rho = \sum_i p_i
  \oprod{\psi_i}{\psi_i}$ with $\vec{x} = (\Tr(A_1 \rho), \Tr(A_2
  \rho))$. Then $\vec{x} = \sum_i p_i (\bra{\psi_i}A_1\ket{\psi_i},
  \bra{\psi_i}A_2\ket{\psi_i})$. Since $\vec{x}$ is extreme in
  $C_2(A)$, it follows that $\vec{x} = (\bra{\psi_i}A_1\ket{\psi_i},
  \bra{\psi_i}A_2\ket{\psi_i})$ for all $i$, which contradicts the
  fact that $\ket{\psi}$ is UDP unless each $\ket{\psi_i}$ is the same
  up to global phase (i.e., $\rho = \oprod{\psi}{\psi}$).
\end{proof}

Based on the proof of Theorem~\ref{th:Num}, we might expect that UDP
implies UDA for all pure states whenever $W_m({\mathbf A})$ is convex.
However, the example provided in Sec.~IV showed this not to be the
case. We now expand upon the reason for this apparent discrepancy,
which lies buried in the proof of Proposition~\ref{prop:embry}.

In the case when $W_m({\mathbf A})$ is convex, the ``only if''
implication of Proposition~\ref{prop:embry} still holds for arbitrary
$m$. However, the proof of the ``if'' implication relies on the fact
that if $\vec{x} := \mathbf{A}(\ket{\phi})$ and $\vec{y} :=
\mathbf{A}(\ket{\psi})$, then for any $s \in (0,1)$ we can find
$\alpha, \beta \in \mathbb{C}$ such that $s\vec{x} + (1-s)\vec{y} =
\mathbf{A}(\alpha\ket{\phi} + \beta\ket{\psi})$. In other words, the
proof of the proposition uses the fact that $W_m({\mathbf A})$ is not
only convex, but that convex combinations are in a sense well-behaved
between the input and output of $\mathbf{A}(\cdot)$. For convenience,
we refer to this property as \emph{strong convexity} for the remainder
of this section.

The standard proofs of the Hausdorff--Toeplitz theorem show that
strong convexity, not just convexity itself, always holds when $m =
2$. To see how strong convexity can fail when $m > 2$ even when
convexity holds, we again return to the example of Sec.~IV. In this
case, we have $\mathbf{A}(\ket{0}) = (0,0,1)$ and $\mathbf{A}(\ket{1})
= (0,0,-1)$. However, even though $W_3(A)$ is convex and thus there
exists a pure state $\ket{\psi}$ with $\mathbf{A}(\ket{\psi}) =
(0,0,0)$, the only such pure state is $\ket{\psi} := \ket{2}$, which
is not contained in the span of $\ket{0}$ and $\ket{1}$.

We might hope that strong convexity, rather than convexity itself,
provides the natural generalization of Theorem~\ref{th:Num}. That is,
we might hope that if $W_m({\mathbf A})$ is strongly convex, then UDP
implies UDA for all pure states. It turns out that this is a true but
vacuous statement --- if $W_m({\mathbf A})$ is strongly convex then it
must be the case that $m \leq 2$, so Theorem~\ref{th:Num} itself
applies directly. This fact seems to be implicit in many papers on the
joint numerical range, but we prove it here for completeness.

Before stating the result, we briefly note that we can assume without
loss of generality that ${\mathbf A}$ contains $I$ and is linearly
independent, as adding the identity to ${\mathbf A}$ has no effect on
convexity, UDA, or UDP, and furthermore these properties only depend
on the span of the observables in ${\mathbf A}$.
\begin{proposition}\label{prop:strong_convex}
  Let ${\mathbf A} = (I, A_1, \ldots, A_m)$ be a linearly independent
  set. Then $W_{m+1}({\mathbf A})$ is strongly convex if and only if
  $m \leq 2$.
\end{proposition}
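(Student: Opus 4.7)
My plan is to handle the two directions of the equivalence separately, with the ``strong convexity fails'' direction being the harder one.

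For $m\le 2\Rightarrow$ strong convexity: Fix $\ket\phi,\ket\psi$, not necessarily orthogonal, and set $V=\mathrm{span}(\ket\phi,\ket\psi)$. Compressing each $A_k$ to $V$ and writing $P_V A_k P_V=c_k I_V+\vec b_k\cdot\vec\sigma$ in the Bloch parametrization of $\mathrm{Herm}(V)$, the linear map $\vec r\mapsto(\vec b_1\cdot\vec r,\ldots,\vec b_m\cdot\vec r)$ from $\R^3$ to $\R^m$ has rank at most $m\le 2$. Consequently, the image of the Bloch \emph{ball} of $V$ in $\R^{m+1}$ under $\mathbf{A}$ coincides with the image of the Bloch \emph{sphere}, so every point of the chord from $\mathbf{A}(\ket\phi)$ to $\mathbf{A}(\ket\psi)$ is already attained by some unit vector in $V$. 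This is exactly the content invoked in the standard proof of the Hausdorff--Toeplitz theorem used in Theorem~\ref{th:Num}.

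For $m\ge 3\Rightarrow$ strong convexity fails, the idea is to find a 2-dimensional $V$ for which the rank of the Bloch map above is exactly $3$. Taking $\ket\phi,\ket\psi$ as an orthonormal basis corresponding to Bloch vectors $\pm\hat z$, the midpoint $(1,c_1,\ldots,c_m)$ of $\mathbf{A}(\ket\phi)$ and $\mathbf{A}(\ket\psi)$ corresponds to $\vec r=\vec 0$; a pure state in $V$ attaining it must satisfy $|\vec r|=1$ together with $\vec b_k\cdot\vec r=0$ for every $k$, impossible as soon as $\vec b_1,\ldots,\vec b_m$ span $\R^3$. Equivalently, the traceless parts of the compressions $P_V A_k P_V$ must span all of $\mathrm{Herm}_0(V)\cong\R^3$.

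The crux is therefore the following lemma, which is the main obstacle: for any linearly independent $\{I,A_1,\ldots,A_m\}$ with $m\ge 3$, some 2-dimensional $V\subset\mathcal H_d$ makes the compression map $L_V:\mathrm{span}(I,A_1,A_2,A_3)\to\mathrm{Herm}(V)$ a bijection between 4-dimensional real spaces. I would prove this by a genericity argument on the Grassmannian $G(2,d)$: the non-injectivity locus $\{V:\det L_V=0\}$ is the zero set of a polynomial in the coordinates of $V$, so it suffices to exhibit a single $V$ on which $L_V$ is bijective. For most choices of $\{A_i\}$ a standard basis pair $V=\mathrm{span}(\ket i,\ket j)$ works, reducing the question to linear independence of the $(i,j)$-principal $2\times 2$ submatrices of $I,A_1,A_2,A_3$; in degenerate cases one introduces complex coefficients, for instance $V=\mathrm{span}((\ket 1+i\ket 2)/\sqrt 2,(i\ket 1+\ket 2+\ket 3)/\sqrt 3)$, which populates the $\sigma_y$ direction of $\mathrm{Herm}_0(V)$. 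A case analysis, organized by which nonzero elements of $\mathrm{span}(I,A_1,A_2,A_3)$ could lie in $\ker L_V$ simultaneously for every $V$, together with the observation that $I$ itself never belongs to any $\ker L_V$ (since $P_V I P_V=I_V\neq 0$), rules out the possibility that the non-injectivity locus exhausts $G(2,d)$ while $\{I,A_1,A_2,A_3\}$ remains linearly independent. With the lemma in hand, taking $\ket\phi,\ket\psi$ to be an orthonormal basis of a good $V$ yields the desired pair witnessing the failure of strong convexity, completing the proof.
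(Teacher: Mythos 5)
Your overall architecture matches the paper's: the ``if'' direction is the standard Hausdorff--Toeplitz argument, and for the ``only if'' direction both you and the paper reduce to finding a two-dimensional subspace $V$ on which the compressions of $I, A_1, A_2, A_3$ remain linearly independent (hence span $\mathrm{Herm}(V)$), after which the joint numerical range restricted to $V$ is the non-convex Bloch sphere and strong convexity fails at the midpoint of two antipodal pure states of $V$. That reduction, including the observation that the midpoint corresponds to $\vec r=\vec 0$ while every pure state of $V$ has $|\vec r|=1$, is correct and is essentially the paper's ``$W_4(\mathbf{B})$ is the Bloch sphere'' step.

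The gap is in the key lemma itself. The paper does not prove it; it invokes Theorem~4.1 of Li and Poon \cite{LP99}, which asserts precisely that when the span of $\{I,A_1,\ldots,A_m\}$ has dimension at least $4$ there is an isometry $X\in M_{d,2}$ such that $X^*\mathbf{A}X$ spans $M_2$. Your proposed replacement is not a proof. The Grassmannian/genericity observation is circular for this purpose: knowing that $\{V:\det L_V=0\}$ is a polynomial hypersurface only pays off once a single good $V$ has been exhibited, which is exactly the task at hand. The claim that ``for most choices of $\{A_i\}$ a standard basis pair works'' fails in a case the paper explicitly cares about (cf.\ Corollary~\ref{algebra}): if all the $A_i$ are diagonal, then for \emph{every} pair $(i,j)$ the compressions to $\mathrm{span}(\ket{i},\ket{j})$ are diagonal and span only a two-dimensional subspace of $\mathrm{Herm}(V)$, so all standard basis pairs fail simultaneously and a good $V$ must have carefully chosen complex, non-sparse coordinates. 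Finally, the ``case analysis organized by which nonzero elements could lie in $\ker L_V$ for every $V$'' is not carried out, and it is the whole difficulty: one must exclude that a fixed four-dimensional operator system has, for each of the infinitely many $V$, some $V$-dependent nonzero element annihilated by compression. Either cite Li--Poon as the paper does, or supply an actual proof of this lemma; as written, this step is missing.
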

\begin{proof}
  The ``if'' direction, as already mentioned, follows from any of the
  usual proofs of the Hausdorff--Toeplitz theorem.
	
  For the ``only if'' direction, suppose that that $W_{m+1}({\mathbf
    A})$ is strongly convex and assume (in order to get a
  contradiction) that $m \geq 3$. By \cite[Theorem~4.1]{LP99}, there
  exists $X \in M_{d,2}$ with $X^*X = I$ such that $X^* {\mathbf A} X
  := \{I, X^*A_1 X, \ldots, X^*A_m X\}$ spans all of $M_2$. By letting
  $\ket{\phi}$ and $\ket{\psi}$ be the column vectors of $X$, we see
  that strong convexity of $W_{m+1}({\mathbf A})$ immediately implies
  convexity of $W_{m+1}(X^* {\mathbf A} X)$. Since convexity of
  $W_{m+1}(X^* {\mathbf A} X)$ depends only on the span of $X^*
  {\mathbf A} X$, it follows that $W_{4}({\mathbf B})$ is also convex,
  where $\mathbf{B} := \{\lambda_0, \lambda_1, \lambda_2, \lambda_3\}$
  is the Pauli basis given by Equation~\eqref{eq:Pauli}. However, it
  is easily verified that $W_{4}({\mathbf B})$ is the Bloch sphere
  embedded in $4$-dimensional space and hence is not convex, which
  gives the desired contradiction.
\end{proof}

It thus seems that numerical range and convexity arguments are not
able to tell us anything non-trivial about the UDP/UDA problem beyond
the $m = 2$ case of Theorem~\ref{th:Num}.

\section*{Appendix D: Symmetries and UDP/UDA: Proof of Theorem~\ref{th:symmetry}}

We will see that if there is a compact group of symmetries whose fixed
point set is the linear span of observables $(A_1, \ldots, A_m)$, then
UDP for these observables implies UDA for any pure state. We start
with the following result about fixed points of compact groups.

\begin{theorem} \label{Haar}Let $G$ be a compact group of unitaries on
  a real or complex finite dimensional Hilbert space $H$, and let $L$
  be the set of fixed points of $G$. Let $\mu$ be Haar measure on $G$,
  and define $P:H \to H$ to be the linear map satisfying
\begin{equation} \label{Pdef} \langle P\xi, \eta \rangle   = \int_G \langle g\xi, \eta\rangle  \, d\mu(g)
\end{equation}
for $\xi, \eta \in H$.  Then $P$ is the orthogonal projection onto $L$,
  $Pg = gP = P$ for all $g \in G$, and $P$ is in the convex hull of $G$.
\end{theorem}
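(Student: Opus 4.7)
The plan is to define $P$ via the Riesz representation theorem from the given sesquilinear form and then verify each assertion in turn, the key input throughout being the invariance of Haar measure $\mu$ under left and right translations. For fixed $\xi \in H$, the antilinear map $\eta \mapsto \int_G \langle g\xi, \eta\rangle\, d\mu(g)$ is bounded in $\eta$ by $\|\xi\|\,\|\eta\|$ (since each $g$ is unitary and $\mu$ is a probability measure), so Riesz produces a unique vector $P\xi$ satisfying \eqref{Pdef}, and $\xi \mapsto P\xi$ is linear by linearity of the integral.

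Next I would derive the intertwining relations $gP = Pg = P$ directly from invariance of $\mu$. For $Pg = P$, right-invariance gives
\begin{equation*}
  \langle Pg\xi, \eta\rangle = \int_G \langle hg\xi, \eta\rangle\, d\mu(h) = \int_G \langle h\xi, \eta\rangle\, d\mu(h) = \langle P\xi, \eta\rangle.
\end{equation*}
For $gP = P$, move $g$ across the inner product using unitarity and then apply left-invariance in the same way. From $gP = P$ it follows that every $P\xi$ is fixed by every $g$, so $\mathrm{range}(P) \subseteq L$. Conversely, if $\xi \in L$ then $g\xi = \xi$ for all $g$, so the defining identity collapses to $\langle P\xi, \eta\rangle = \langle \xi, \eta\rangle$ and $P$ is the identity on $L$. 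Together these give $P^2 = P$ and $\mathrm{range}(P) = L$.

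To upgrade $P$ from an idempotent with range $L$ to the orthogonal projection onto $L$, I would show that $\ker P \supseteq L^\perp$. Given $\eta \in L^\perp$ and any $\xi \in L$, unitarity of $g$ together with $g\xi = \xi$ yield
\begin{equation*}
  \langle P\eta, \xi\rangle = \int_G \langle g\eta, \xi\rangle\, d\mu(g) = \int_G \langle g\eta, g\xi\rangle\, d\mu(g) = \langle \eta, \xi\rangle = 0.
\end{equation*}
Since $P\eta \in L$ is orthogonal to all of $L$, $P\eta = 0$. Hence $P$ is the orthogonal projection onto $L$, and in particular self-adjoint.

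Finally, for the convex-hull statement, interpret $P$ as the Bochner integral $\int_G g\, d\mu(g)$ in the finite-dimensional Banach space $\mathcal{B}(H)$; this agrees with the $P$ above by testing against any $\langle \cdot\, \xi, \eta\rangle$. The inclusion $g \mapsto g$ of $G$ into $\mathcal{B}(H)$ is continuous, and approximating the integral by Haar-weighted Riemann sums over a partition $\{E_i\}$ of $G$ (choosing $g_i \in E_i$ and $c_i = \mu(E_i)$, so $\sum c_i g_i$ is a convex combination of elements of $G$) places $P$ in the closed convex hull of $G$. Because $G$ is compact and $\mathcal{B}(H)$ is finite-dimensional, Carath\'eodory's theorem ensures that $\mathrm{conv}(G)$ is already compact, hence closed, so $P$ is a genuine finite convex combination of elements of $G$. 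The only mildly delicate step in this plan is the kernel calculation in the third paragraph; the rest is bookkeeping around Haar invariance and standard finite-dimensional convexity.
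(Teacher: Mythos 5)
Your proof is correct, and it diverges from the paper's argument at the two places where there is any real content. For orthogonality, the paper proves $P^{\dagger}=P$ directly, using the invariance of Haar measure under $g\mapsto g^{-1}$ together with unitarity to show $\langle \xi,P\eta\rangle=\langle P\xi,\eta\rangle$, and then concludes that the self-adjoint idempotent with range $L$ is the orthogonal projection. You instead show $\ker P\supseteq L^{\perp}$ by the computation $\langle P\eta,\xi\rangle=\int_G\langle g\eta,g\xi\rangle\,d\mu=\langle\eta,\xi\rangle=0$ for $\xi\in L$, which needs only unitarity and $g\xi=\xi$ (no invariance of $\mu$ at all in that step), and then read off orthogonality from $H=L\oplus L^{\perp}$; self-adjointness comes for free afterwards. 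For the convex-hull claim, the paper invokes the Alaoglu--Birkhoff mean ergodic theorem to place $P$ in the strong closure of $\mathrm{conv}(G)$, whereas you identify $P$ with the vector-valued integral $\int_G g\,d\mu(g)$ and use Riemann-sum approximation (legitimate since $G$ is a compact metrizable subset of the finite-dimensional space $\mathcal{B}(H)$) plus Carath\'eodory compactness of $\mathrm{conv}(G)$; both arguments need the same final step that $\mathrm{conv}(G)$ is closed in finite dimensions. Your route is more elementary and self-contained, at the cost of a little more hands-on measure-theoretic bookkeeping; the paper's is shorter by citation. The remaining steps ($Pg=gP=P$ from left/right invariance, $\im P=L$, $P^2=P$) match the paper's proof essentially verbatim.
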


\begin{proof} Left and right invariance of Haar measure imply that $Pg
  = gP = P$ for all $g \in G$. The definition of $P$ implies that $L
  \subset \im P$. Now $gP = P$ for all $g \in G$ implies $\im P
  \subset L$, and hence $\im P = L$. Next, $\im P = L$ and the
  definition of $P$ give $P^2 = P$. To show that $P^\dagger = P$ we
  use the fact that the integrals of $f(g)$ and $f(g^{-1})$ are the
  same for Haar measure, together with the assumption that $G$ is a
  group of unitaries:
\begin{align}
  \langle \xi, P\eta\rangle &={ \langle P\eta, \xi\rangle^* } =\int_G
  {\langle g\eta, \xi\rangle^* } d\mu(g)\cr &= \int_G \langle \xi,
  g\eta\rangle d\mu(g) = \int_G \langle g^\dagger\xi, \eta\rangle
  d\mu(g)\cr &= \int_G \langle g^{-1}\xi, \eta\rangle d\mu(g) = \int_G
  \langle g\xi, \eta\rangle d\mu(g) \cr &= \langle P\xi, \eta\rangle
\end{align}

Finally, by the Alaoglu--Birkhoff mean ergodic theorem \cite[Prop.
4.3.4]{BR} $P$ is in the strong closure of the convex hull of $G$.
Since $H$ is finite dimensional, then the space of linear operators on
$H$ is also finite dimensional, so the convex hull of the compact set
$G$ is compact and hence closed.

\end{proof}

Symmetries of $K_d$ are given by conjugation by unitaries or by the
transpose map or by composition of these two types of symmetries. (An
affine automorphism of $K_d$ preserves transition probabilities, cf.
\cite{Shultz}, so this is a consequence of Wigner's theorem
\cite[233-236]{Wigner}.)

If we view the space of observables in $M_d$ as a real Hilbert space
(with the usual inner product $\langle X, Y\rangle = \tr(X Y)$), then
conjugation by unitaries and the transpose map both preserve this
inner product, so are given by unitaries on this Hilbert space.

If $L$ is a (real) linear subspace of observables containing the
identity, then $L$ will be the real linear span of $L \cap K_d$. Thus
any symmetry of $K_d$ will fix $L \cap K_d$ if and only if that
symmetry when extended to a map on $M_d$ fixes $L$. If $G$ is a
compact group of symmetries whose fixed point set is $L \cap K_d$,
then the corresponding maps on $M_d$ will have fixed point set $L$.

\begin{theorem} \label{symmetry} Let $A$ be a finite set of
  observables on $H_d$ with real linear span $L$. Assume there exists
  a compact group $G$ of affine automorphisms of $K_d$ whose fixed
  point set is $L \cap K_d$. Then each pure state which is UDP for
  measuring $A$ is also UDA.
\end{theorem}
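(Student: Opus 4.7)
The plan is to extend the $G$-action from $K_d$ to the ambient real Hilbert space of Hermitian matrices (via the linear extension discussed before the statement, using Wigner's theorem to identify symmetries with conjugation by unitaries or with the transpose map), so that Theorem~\ref{Haar} applies with fixed point set exactly $L$. This yields a Haar-averaging orthogonal projection $P=\int_G g\,d\mu(g)$ onto $L$. Since $L$ is the real linear span of $A$, two states agree on all measurements in $A$ if and only if they have the same image under $P$.

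The first key step is to observe that UDP forces $\oprod{\psi}{\psi}$ itself to be a fixed point of $G$. For any $g\in G$, the state $g(\oprod{\psi}{\psi})$ is pure, because affine automorphisms of $K_d$ permute extreme points, and it satisfies $P(g(\oprod{\psi}{\psi}))=P(\oprod{\psi}{\psi})$ because $Pg=P$ by Theorem~\ref{Haar}. Hence $g(\oprod{\psi}{\psi})$ yields the same measurement results on $A$ as $\oprod{\psi}{\psi}$, and UDP forces $g(\oprod{\psi}{\psi})=\oprod{\psi}{\psi}$. Consequently $\oprod{\psi}{\psi}\in L\cap K_d$ and $P(\oprod{\psi}{\psi})=\oprod{\psi}{\psi}$.

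Now suppose $\sigma\in K_d$ gives the same measurement results on $A$ as $\oprod{\psi}{\psi}$, so $P(\sigma)=P(\oprod{\psi}{\psi})=\oprod{\psi}{\psi}$, i.e.
\[
\int_G g(\sigma)\,d\mu(g)=\oprod{\psi}{\psi}.
\]
For any vector $\ket{\xi}$ orthogonal to $\ket{\psi}$, the function $g\mapsto \bra{\xi}g(\sigma)\ket{\xi}$ is continuous and nonnegative on the compact group $G$, and its Haar integral equals $|\iprod{\xi}{\psi}|^2=0$; hence the function vanishes identically. Therefore $\supp g(\sigma)\subseteq \C\ket{\psi}$ for every $g\in G$, so each $g(\sigma)$ must equal $\oprod{\psi}{\psi}$. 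Specializing to $g=e$ gives $\sigma=\oprod{\psi}{\psi}$, which proves UDA.

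The main point that requires care is not conceptual but bookkeeping: one must verify that the $G$-action on $K_d$ extends by linearity to a unitary action on the space of Hermitian operators so that Theorem~\ref{Haar}'s fixed-point hypothesis matches the linear span $L$, and one must spell out that the support argument closing the proof only needs continuity and nonnegativity of the integrand together with full support of Haar measure. Once these points are set up, the logical flow is short: UDP forces $\oprod{\psi}{\psi}$ to be $G$-fixed, and then any compatible $\sigma$ averages to the pure state $\oprod{\psi}{\psi}$, whose extremality pins down every $g(\sigma)$ and in particular $\sigma$ itself.
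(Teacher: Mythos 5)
Your proof is correct and takes essentially the same route as the paper's: both identify measurement-equivalence with equality of images under the Haar-average projection $P$ of Theorem~\ref{Haar}, show that UDP forces $\oprod{\psi}{\psi}$ to be a fixed point of $G$ (hence to lie in $L\cap K_d$), and then show that any state mapped by $P$ to such a $G$-fixed pure state must equal that pure state. The only divergence is in the closing step, where the paper uses a Cauchy--Schwarz equality argument in the Hilbert--Schmidt inner product ($1=\langle P\sigma,\rho\rangle\le\int_G\|g\sigma\|\,\|\rho\|\,d\mu\le 1$ forces $g\sigma=\rho$ for all $g$), while you use positivity of each $g(\sigma)$ and full support of Haar measure to force $\supp g(\sigma)\subseteq\C\ket{\psi}$; both arguments are valid.
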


\begin{proof} As discussed above, we may view $G$ as a compact group
  of unitaries with fixed point set $L$. Define $P$ as in Theorem
  \ref{Haar}. Fix a pure state $\rho$.

  Suppose first that $\rho \notin L\cap K_d$. Then there is some $g
  \in G$ such that $g(\rho) \not= \rho$. Since $Pg = P$, both
  $g(\rho)$ and $\rho$ are pure states with the same image in $L\cap
  K_d$ under the map $P$. Thus UDP fails for $\rho$ (and hence
  trivially UDA fails).

  Now suppose $\rho \in L\cap K_d$. Let $\sigma \in K_d$ be a pre
  image of $\rho$ under $P$. Then
  $$1 = \langle  P \sigma, \rho\rangle  = |\int_G \langle g\sigma,
  \rho\rangle \,d\mu(g)| \le \int_G ||g\sigma \|\|\rho\| \,d\mu(g)\le
  1,$$ and equality can hold only if $g\sigma = \rho$ for all $g$,
  i.e., if and only if $\sigma \in L$. Then $\sigma = P\sigma = \rho$,
  so for such $\rho$ both UDP and UDA hold.
\end{proof}

\begin{corollary}\label{cor:2dim} For $d=2$, for all pure states and
  all sets ${\bf A}$ of observables, UDP implies UDA.
\end{corollary}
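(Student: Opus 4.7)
The plan is to apply Theorem~\ref{symmetry}, reducing the corollary to the purely geometric statement that every linear subspace of the Bloch ball is the fixed-point set of some compact subgroup of $O(3)$.

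Without loss of generality I would first append the identity to $\mathbf{A}$, so that $L := \mathcal{S}(\mathbf{A})$ contains $I$ and decomposes as $L = \mathbb{R} I \oplus L_0$, with $L_0$ a subspace of the three-dimensional real space of traceless Hermitian $2 \times 2$ matrices. Using the Pauli parameterization, $L_0$ is identified with a linear subspace of $\mathbb{R}^3$, and $L \cap K_2$ is identified with $L_0$ intersected with the closed unit ball. I would next recall that the affine automorphism group of $K_2$ acts on the traceless slice as all of $O(3)$: conjugation by $SU(2)$ yields the rotation subgroup $SO(3)$, and the transpose map (which sends $Y \mapsto -Y$ in the Pauli basis) supplies a reflection of determinant $-1$, so together they generate the full orthogonal group.

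It then suffices to produce, for each possible value of $\dim L_0 \in \{0,1,2,3\}$, a compact subgroup $G \le O(3)$ whose fixed-point set in $\mathbb{R}^3$ is exactly $L_0$. If $\dim L_0 = 0$, take $G = O(3)$; if $\dim L_0 = 1$, take $G$ to be the circle $SO(2)$ of rotations about the line $L_0$; if $\dim L_0 = 2$, take $G = \{\mathrm{id}, R_{L_0}\}$ where $R_{L_0}$ is the reflection through the plane $L_0$; if $\dim L_0 = 3$, take $G$ trivial. Each such $G$ is compact with fixed subspace exactly $L_0$, hence, viewed as a group of symmetries of $K_2$, its fixed point set is precisely $L \cap K_2$. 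Theorem~\ref{symmetry} then delivers UDP $\Rightarrow$ UDA for every pure state.

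I do not anticipate a real obstacle, since the argument is essentially a finite case check on $\dim L_0$. The only item meriting a careful reading is the realization of the rotation-about-a-line and reflection-across-a-plane subgroups as genuine affine automorphisms of $K_2$, rather than as abstract subgroups of $O(3)$; but this is immediate from the identification $\mathrm{Aut}(K_2) \cong O(3)$ established in the first step.
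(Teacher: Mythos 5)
Your proof is correct and follows essentially the same route as the paper's: both reduce the corollary to Theorem~\ref{symmetry} via a case analysis on the dimension of the fixed-point set in the Bloch ball, exhibiting an explicit compact group of affine automorphisms of $K_2$ (built from $SU(2)$ conjugations and the transpose-induced reflection) whose fixed points are exactly $L \cap K_2$. The only cosmetic difference is the choice of groups: the paper uniformly uses the order-2 group generated by reflection in the fixed-point set, while you use $O(3)$, $SO(2)$, a single reflection, and the trivial group in the respective cases.
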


\begin{proof}
  Let $A_1 = I, A_2, \ldots, A_m$ be observables in $M_2$ and let $L =
  S(A)$ be their real linear span. We will show that there is a finite
  group of affine automorphisms $G$ of the state space $K_d$ of $M_2$
  with fixed point set $L\cap K_d$. There are three cases, depending
  on the dimension of the fixed point set. The fixed point set in the
  Bloch sphere will be the central point, a diameter of the Bloch
  sphere, or the intersection of a plane (through the center) with the
  Bloch sphere. In each case reflection of the Bloch sphere in the
  fixed point set generates an order 2 group of affine automorphisms
  with fixed point set $L\cap K_d$. Now the corollary follows from
  Theorem \ref{symmetry}.
\end{proof}

\begin{corollary}\label{algebra} Let ${\bf A} = A_1,\ldots, A_p$ be
  observables in $M_d$. If the (complex) linear span of $\bf A$ is a
  *-subalgebra $\A$ of $M_d$, then UDP = UDA for pure states measured
  by these observables.
\end{corollary}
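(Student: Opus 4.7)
The plan is to exhibit a compact group $G$ of affine automorphisms of $K_d$ whose fixed point set is exactly $L\cap K_d$, where $L=\mathcal{S}(\mathbf{A})$ is the real span of $\mathbf{A}$, and then invoke Theorem~\ref{th:symmetry}. Since $L$ is the self-adjoint part of $\A$ and elements of $K_d$ are self-adjoint, $L\cap K_d = \A \cap K_d$, so it suffices to find $G$ whose fixed point set on $M_d$ (under conjugation) is the whole *-subalgebra $\A$. Note that without loss of generality we may append $I$ to $\mathbf{A}$ (this changes neither UDP, UDA, nor $\A$ if $I\in\A$ already), so we may assume $\A$ is unital.

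The natural candidate is $G := \mathcal{U}(\A')$, the group of unitaries inside the commutant $\A' = \{Y \in M_d : YX = XY \text{ for all } X \in \A\}$. This group is compact because it is a closed and bounded subset of the finite-dimensional space $M_d$, and each $U \in G$ defines an affine automorphism $\rho \mapsto U\rho U^\dagger$ of $K_d$. The key step is to identify the fixed point set
\begin{equation*}
F := \{X \in M_d : U X U^\dagger = X \text{ for all } U \in \mathcal{U}(\A')\}
\end{equation*}
with $\A$ itself. The inclusion $\A \subseteq F$ is immediate from the definition of the commutant: any $X \in \A$ commutes with every $U \in \A'$, hence with every $U \in \mathcal{U}(\A')$. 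The reverse inclusion $F \subseteq \A$ is where I would apply the double commutant theorem: in a finite-dimensional *-algebra, every element is a complex linear combination of unitaries (write any element as $A+iB$ with $A,B$ self-adjoint of norm $\le 1$, and then each of $A,B$ as $(U_1+U_2)/2$ for unitaries $U_1,U_2$ obtained from its spectral decomposition). Thus $\mathcal{U}(\A')$ spans $\A'$ linearly, so $F$ equals the commutant of $\A'$ in $M_d$, which by the finite-dimensional double commutant theorem is $\A'' = \A$.

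With $F=\A$ established, the fixed point set of $G$ inside $K_d$ is $\A \cap K_d = L\cap K_d$, so the hypotheses of Theorem~\ref{th:symmetry} are satisfied and the corollary follows. The main obstacle is the $F\subseteq \A$ step, which genuinely requires the double commutant identity together with the observation that unitaries span a finite-dimensional *-algebra; everything else is routine packaging. It is worth flagging the harmless but necessary convention that $I \in \A$, since the theorem statement referenced is phrased for observable sets that contain $I$, and the corollary is really a statement about unital *-subalgebras.
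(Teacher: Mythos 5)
Your proposal is correct and follows essentially the same route as the paper's proof: both take $G$ to be conjugation by the unitary group of the commutant $\A'$, use the fact that a finite-dimensional *-algebra is spanned by its unitaries together with the bicommutant theorem $(\A')'=\A$ to identify the fixed point set with $\A$, and then invoke Theorem~\ref{th:symmetry}. The only difference is that you spell out the "unitaries span the algebra" step, which the paper merely asserts.
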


\begin{proof} UDP and UDA for a set of observables aren't affected if
  we include the identity among those observables, so hereafter we
  assume that $I_d \in {\bf A}$. Note that $\A$ is the linear span of
  the unitaries in $\A$. Furthermore, $\A$ is a von Neumann algebra
  containing the identity $I_d$, so by the bicommutant theorem
  \cite[Thm. 2.77]{AlfsenShultz} $(\A')' = \A$, where for $X \subset
  M_d$, $X'$ denotes the algebra of matrices that commute with all
  matrices in $X$. Combining these two statements shows that $\A$ is
  the set of matrices that commute with all unitaries in $\A'$, and
  thus is the set of fixed points of $G = \{{\rm Ad}_U \mid \text{U is
    a unitary in $\A'$} \}$. It follows that $L = \A_{sa}$ (the
  Hermitian matrices in $\A$) is the set of observables fixed by the
  compact group $G$. The corollary follows from Theorem
  \ref{symmetry}.

\end{proof}

\begin{example}
  Let ${\bf A} = \{E_{11}, \ldots, E_{dd}\}$. Then the complex linear
  span of ${\bf A}$ consists of the diagonal matrices. From Corollary
  \ref{algebra} it follows that for each pure state on $M_d$, UDP for
  ${\bf A}$ implies UDA.
\end{example}

We can generalize the last example by taking the *-algebra consisting
of diagonal observables with the restriction that certain diagonal
entries coincide. For example, if $d = 7$ we can look at diagonal
matrices of the form $\diag(a, a, b, b, b, c, d)$ whose linear span
will be 4 dimensional. The space of Hermitian members of this algebra
is four dimensional. If we choose 4 observables including the identity
spanning this space, then UDA = UDP for all pure states when measuring
these observables. (We could drop the identity from this list if we
wish.) In this way for any $d$ we can find a set of $k$ observables
for any $k \le d$ for which UDA = UDP.

We can also find many larger sets of observables for which UDA = UDP.
For example, for any $d$ we can consider the *-algebra of all block
diagonal matrices with $k$ blocks that are of size $d_i \times d_i$
for $1 \le k \le p$, where $d_1 + d_2 +\cdots d_p = d$. The subspace
of Hermitian matrices in this algebra has dimension $\sum_i d_i^2$, so
any such dimension is realizable as the number of observables in a set
of observables for which UDA = UDP holds for all pure states.

\section{Acknowledgements}

JC is supported by NSERC and NSF of China (Grant No. 61179030). ZJ
acknowledges support from NSERC, ARO. NJ is supported by the
University of Guelph Brock Scholarship and an NSERC Postdoctoral
Fellowship. DWK is supported by NSERC Discovery Grant 400160 and NSERC
Discovery Accelerator Supplement 400233. BZ is supported by NSERC and
CIFAR.

\end{document}